%% file: main.tex
\documentclass[journal]{IEEEtran}
\IEEEoverridecommandlockouts

\usepackage{silence}
\usepackage{cite}

\usepackage{amsmath,amssymb,amsthm,fixmath}

\usepackage{color,colortbl}
\usepackage{xcolor}

\usepackage[inline]{enumitem}

\usepackage{siunitx}
\DeclareSIUnit{\dBm}{dBm}

\usepackage{graphicx}
\usepackage[labelformat=simple]{subcaption}

\usepackage{epstopdf}
\usepackage{lipsum}

\usepackage{cuted}
\usepackage{multirow,booktabs}
\usepackage{diagbox}
\usepackage{makecell}
\usepackage{hhline}
\usepackage{array} 
\newcolumntype{x}{!{\vrule width 2px}}
\newcolumntype{y}{!{\vrule width 1.5px}}

\usepackage{optidef}

\usepackage{nohyperref} 

\usepackage[shortcuts,acronym,automake]{glossaries}
\input{glossary}

\usepackage{tcolorbox}
\tcbuselibrary{many}
\newtheorem{theorem}{Theorem}
\newtheorem{corollary}{Corollary}
\newtheorem{proposition}{Proposition}
\theoremstyle{definition}
\newtheorem{remark}{Remark}

\newtcbtheorem[]{alg}{Algorithm}{fonttitle=\bfseries}{alg}
\usepackage[norelsize,linesnumbered,ruled]{algorithm2e}
\SetKwRepeat{Do}{do}{while}
\makeatletter
\newcommand{\removelatexerror} {\let\@latex@error\@gobble}
\makeatother

\newcommand{\superscript}[1]{^{\mathrm{#1}}}
\newcommand{\subscript}[1]{_{\mathrm{#1}}}
\newcommand{\diff}{\text{d}}

\usepackage[normalem]{ulem}

\usepackage[textsize=tiny,colorinlistoftodos]{todonotes}
\makeatletter
\define@key{todonotes}{bh}[]{
	\setkeys{todonotes}{author=\textbf{Bin}, color=lime!30}}%
\makeatother

\tikzstyle{note}=[rectangle, minimum width=3cm, draw = none, fill = none, minimum width = 1.5cm, anchor=center, align=left]
\tikzstyle{block}=[rectangle, draw, line width=1pt, fill = none, minimum width = 1cm, minimum height = 0.75cm, anchor=center, inner sep = 0.5mm, align=center]
\tikzstyle{arrow} = [thick,->,>=stealth]

\newif\ifreviewmode
\reviewmodetrue 

\ifreviewmode
\else
  \renewcommand{\todo}[1]{} 
\fi

\begin{document}
\let\url\gobble

\title{The Price of Ignorance: Information-Free Quotation for Data Retention in Machine Unlearning}
\author{
	Bin~Han,~\IEEEmembership{Senior Member,~IEEE,}
	Di~Feng,
	Zexin~Fang,~\IEEEmembership{Student Member,~IEEE,}\\
	Jie~Wang,~\IEEEmembership{Member,~IEEE,}
    and~Hans~D.~Schotten~\IEEEmembership{Member,~IEEE}
	\thanks{
		B. Han, Z. Fang, and H. D. Schotten are with RPTU University Kaiserslautern-Landau, Germany. D. Feng is with Dongbei University of Finance and Economics, China. J. Wang is with Tongji University, China. H. D. Schotten is with the German Research Center for Artificial Intelligence (DFKI), Germany.
    	B. Han (bin.han@rptu.de) and D. Feng (dfeng@dufe.edu.cn) are the corresponding authors.
		Part of this work has been accepted for publication at \emph{IEEE ICC 2026 Workshops}.

		\emph{AI disclosure:} In accordance with the IEEE policy on the use of AI, the authors disclose that Anthropic's Claude was employed in a strictly assistive capacity during the preparation of this manuscript---limited to language polishing of author-drafted text, auxiliary verification of the mathematical derivations prepared by the authors, and minor refactoring and debugging assistance of author-designed simulation code. All research ideas, the system model, the analytical framework, the theorems and their proofs, the algorithm design, the experimental methodology, and the interpretation of results were conceived, developed, and verified by the human authors, who take full and sole responsibility for the content and correctness of the paper.
	}
}

\maketitle

\begin{abstract}
When users exercise data deletion rights under the \ac{gdpr} and similar regulations, mobile network operators face a tradeoff: excessive machine unlearning degrades model accuracy and incurs retraining costs, yet existing pricing mechanisms for data retention require the server to know every user's private privacy and accuracy preferences, which is infeasible under the very regulations that motivate unlearning. We ask: \emph{what is the welfare cost of operating without this private information?} We design an information-free ascending quotation mechanism where the server broadcasts progressively higher prices and users self-select their data supply, requiring no knowledge of users' parameters. Under complete information, the protocol admits a unique subgame-perfect Nash equilibrium characterized by single-period selling. We formalize the \emph{Price of Ignorance}---the welfare gap between optimal personalized pricing (which knows everything) and our information-free quotation (which knows nothing)---and prove a three-regime efficiency ordering. Numerical evaluation across seven mechanisms and 5000 Monte Carlo runs shows that this price is near zero: the information-free mechanism achieves $\geqslant 99\%$ of the welfare of its information-intensive benchmarks, while providing noise-robust guarantees and comparable fairness.
\end{abstract}

\begin{IEEEkeywords}
Data privacy, data redemption, machine unlearning, data pricing, price of ignorance, LLM
\end{IEEEkeywords}

\glsresetall

\section{Introduction}\label{sec:intro}
\Ac{llm} and \ac{genai} are increasingly deployed in next-generation mobile networks for intelligent network management and operations~\cite{BSA+2026survey}. 
\Acp{mno} deploy \acp{llm} for predictive maintenance, automated troubleshooting, traffic forecasting, and personalized service 
delivery. These applications require training on extensive user data--including cell handover records, packet inspection logs, geographic mobility traces, and application usage patterns--to achieve acceptable accuracy.

However, data collection raises privacy concerns: the data may contain sensitive information, and trained models may leak or misuse it against users' will. In network operations, this tension is especially acute because user data fed into \ac{llm} training directly influences \ac{qos} for the entire user base, coupling individual privacy with collective network performance.

Being aware of these privacy concerns, data regulations have been established in various countries and regions, represented by the \ac{gdpr} in the European Union~\cite{gdpr}, the \ac{ccpa} in California~\cite{ccpa}, and the \ac{pipl} in China~\cite{pipl}. Such regulations generally require the service providers to obtain the users' consent before collecting their data, and to provide the users with the right to delete their data afterwards.

To enable the removal of user data from \ac{llm} training pipelines--both from stored datasets and deployed models--\acp{mno} must implement machine unlearning~\cite{XZZ+2023machine}. Despite technical feasibility, excessive unlearning creates dual challenges 
for network operators: \begin{enumerate*}[label=(\arabic*)]
	\item computational overhead in retraining resource-
intensive \acp{llm}, and
	\item degraded predictive accuracy that cascades into worse 
traffic management, increased latency, and higher outage rates~\cite{CC2024price}.
\end{enumerate*}
A rigid unlearning policy thus harms both \ac{mno} operational efficiency and user experience, motivating market-based alternatives.

This calls for a data redemption framework that balances unlearning cost against privacy protection. A natural approach is data monetization: the \ac{llm} service provider offers users compensation to retain their data rather than redeem it. Users then choose freely between privacy and compensation based on their own preferences, while the provider benefits from reduced unlearning cost and better model accuracy.

The central challenge is to determine the price of data. While data pricing has been studied in the context of data marketplaces~\cite{ZBL2023survey}, existing works focus on active data trading for model training and largely ignore the redemption and unlearning scenario. This gap makes existing pricing models inapplicable to the data redemption context. For example, works such as~\cite{ZYZ2024price} ignore users' privacy loss when they sell data, while others such as~\cite{NZW+2020online}, though accounting for privacy, do not capture the computational cost of unlearning.

Cui and Cheung~\cite{CC2024price} first addressed pricing in data redemption, proposing an incentive mechanism for machine unlearning. Their model accounts for the server's unlearning cost (computation and accuracy degradation) and the users' privacy utility. A two-stage optimization is formulated: the server first determines an optimal unit price for user data, then each user decides how much data to sell.

The work in~\cite{CC2024price} has two limitations. First, the server's optimal pricing relies on distributional assumptions about users' privacy parameters that may not hold in practice. Second, the two-stage mechanism returns a single uniform price for all users and data batches; given heterogeneous privacy concerns and non-linear privacy utility, this can lead to suboptimal outcomes.

Both limitations share a common root: the server must know or estimate users' private parameters to set optimal prices. This raises a natural question: \emph{what is the welfare cost of operating without any such knowledge?} We call this the \textbf{Price of Ignorance}.

To answer it, we propose an iterative price discovery mechanism based on ascending quotations. The server (\ac{mno}) progressively raises the unit price for data retention, and users independently determine their supply at each quoted price---no user profiling, no parameter estimation. This enables flexible pricing that accommodates the non-linear cost and utility functions of both sides.

Our key contributions are:
\begin{enumerate}
	\item We extend the user payoff model of \cite{CC2025price} with \emph{accuracy-aware} users who incur disutility from model accuracy degradation, and a \emph{generalized privacy function} parameterized by elasticity $k_i$.
	\item We prove that under the ascending-price quotation, concentrating all sales in a single period is a strictly dominant strategy for each user (Theorem~\ref{thm:single_period}), and the resulting sequential game admits a unique \ac{spne} under complete information, characterized by backward induction (Theorem~\ref{thm:spne}).
	\item We formalize the \emph{Price of Ignorance}---the welfare gap between the information-intensive \ac{opp} and our information-free \ac{iiq}---and establish the ordering \ac{opp} $\geqslant$ \ac{ciq} $\geqslant$ \ac{iiq}, where the \ac{ciq}--\ac{iiq} gap depends on the oversupply-handling strategy (Proposition~\ref{prop:welfare_ordering}).
	\item We show that \ac{iiq} is robust to parameter estimation noise: unlike \ac{opp}, which requires the server to know every user's $(\lambda_i, k_i, \theta_i)$, \ac{iiq} provides noise-independent welfare guarantees.
	\item Numerical evaluation across seven mechanisms and 5000 Monte Carlo runs shows that the Price of Ignorance is near zero: \ac{iiq} achieves $\geqslant 99\%$ of \ac{opp} welfare, with comparable fairness across all mechanisms.
\end{enumerate}

The paper is organized as follows. Section~\ref{sec:related} reviews related work. Section~\ref{sec:model} presents the system model. Sections~\ref{sec:analyses}--\ref{sec:approach} analyze user incentives and propose the quotation mechanism. Section~\ref{sec:accuracy} extends the model to accuracy-aware users and derives the \ac{spne}. Section~\ref{sec:efficiency} compares the efficiency of three regimes. Section~\ref{sec:evaluation} provides numerical evaluation. Section~\ref{sec:discussion} discusses practical implications. Section~\ref{sec:conclusion} concludes.

\section{Related Work}\label{sec:related}

\subsection{Machine Unlearning}
Machine unlearning removes a data contributor's influence from a trained model without full retraining. The concept was first formalized by Cao and Yang~\cite{CY2015unlearning}, who proposed converting learning algorithms into summation forms to enable efficient data removal. Bourtoule et~al.~\cite{BRS+2021sisa} introduced the SISA framework, which partitions training data into shards to limit the retraining scope upon deletion requests. Subsequent works addressed unlearning in specific settings: Ginart et~al.~\cite{GGV+2019deletion} for $k$-means clustering, Brophy and Lowd~\cite{BL2021forests} for random forests, Neel et~al.~\cite{NRS2021descent} for gradient-based models via descent-to-delete, and Nguyen et~al.~\cite{NOD+2022markov} for Markov chain Monte Carlo based approaches. Gupta et~al.~\cite{GJN+2021adaptive} proposed adaptive unlearning with formal guarantees. Golatkar et~al.~\cite{GAS2020forgetting} introduced selective forgetting in deep networks using Fisher information, and Sekhari et~al.~\cite{SAK+2021remember} established theoretical bounds separating unlearning from differential privacy.

Recent surveys~\cite{XZZ+2023machine, NQX+2023unlearning} cover the breadth of unlearning techniques. Unlearning in \acp{llm} poses particular challenges, since the scale of models and training data makes full retraining prohibitively expensive~\cite{YFB+2024right}. Liu et~al.~\cite{LMZ+2024fedunlearn} extended unlearning to federated settings. On the adversarial side, Marchant et~al.~\cite{MRA2022forget} demonstrated poisoning attacks that increase unlearning cost, while Thudi et~al.~\cite{TJS+2022auditable} argued that auditable algorithmic definitions are necessary for meaningful unlearning guarantees. Yet the \emph{economic} side of unlearning---how data deletion rights affect the value proposition for data holders and model operators---remains largely open.

\subsection{Data Pricing and Valuation}
Data pricing in data marketplaces has received considerable attention~\cite{ZBL2023survey, Pei2022survey}: Fernandez et~al.~\cite{Fernandez2020market} developed data trading platforms, and Acemoglu et~al.~\cite{Acemoglu2022toomuch} analyzed pricing inefficiencies. Foundational approaches include query-based pricing~\cite{KPT2012query}, where buyers pay per query, and Shapley-value-based data valuation~\cite{GZ2019shapley, JGN+2019shapley, KZ2022shapley}, which distributes the value of a trained model among data contributors. Agarwal et~al.~\cite{ADG+2019marketplace} designed marketplace mechanisms with arbitrage-free pricing guarantees. Another line of work treats data as labor~\cite{AIJ+2018datalabor}, arguing that users should be compensated for their data contributions---a view that aligns with our redemption pricing framework.

However, existing data pricing works address \emph{forward} data trading---sellers supply data for model training. The \emph{reverse} scenario of data redemption, where data owners reclaim already-contributed data from deployed models, inverts the standard marketplace roles: the service provider holds the good (retained data) and the user exercises a deletion right. Cong et~al.~\cite{CMZ2022pipeline} studied pricing in ML pipelines but did not consider unlearning costs. Our work addresses this gap with pricing mechanisms for data redemption, where the server's cost function captures both accuracy degradation and retraining overhead.

\subsection{Incentive Mechanism Design}
Mechanism design for data-related decisions has been applied in federated learning and crowdsourcing. Kang et~al.~\cite{KXN+2019incentive} proposed a joint optimization approach for reliable federated learning with reputation-based incentives. Zhan et~al.~\cite{ZLZ+2020learning} designed learning-based incentive mechanisms for heterogeneous federated learning participants. Le et~al.~\cite{LTH+2021incentive} addressed incentives in wireless cellular networks. For privacy-aware settings, Jin et~al.~\cite{JWH+2019privacy} designed mechanisms for data crowdsourcing under differential privacy~\cite{Dwork2006differential} constraints, and Ghosh and Roth~\cite{GR2015selling} studied selling privacy at market. Acquisti et~al.~\cite{Acquisti2016economics} surveyed the economics of privacy, and Acquisti and Grossklags~\cite{AG2005privacy} showed that individuals' privacy decisions are bounded-rational, motivating mechanism designs that do not rely on users' strategic sophistication---a perspective that directly supports our \ac{iiq} mechanism.

Most closely related to our work is the line of research by Cui and Cheung~\cite{CC2024price, CC2025price}, who introduced the data redemption pricing problem and proposed a bilevel optimization framework. Their model captures both server cost (accuracy degradation and retraining time) and user privacy utility. However, their mechanism requires the server to know or estimate every user's privacy parameter $\lambda_i$ to compute optimal personalized prices. Our work relaxes this requirement by proposing an \emph{information-free} quotation mechanism where the server sets a uniform ascending price and users self-select, requiring no user profiling. A distinctive feature of our setting is that it involves \emph{divisible goods}---users choose continuous supply quantities rather than binary sell/keep decisions---and the server's demand is price-elastic through the accuracy loss function.

\section{System Model}\label{sec:model}
We model the interaction between an \ac{mno} deploying \acp{llm} for network optimization and its user base. The \ac{mno} (henceforth ``server'') operates an \ac{llm} trained on aggregated user data to perform tasks such as:
\begin{itemize}
	\item Traffic prediction: Forecasting bandwidth demand for proactive resource allocation;
	\item Anomaly detection: Identifying network faults or security threats;
	\item Service personalization: Recommending optimal connectivity plans.
\end{itemize}

Each user $i\in\mathcal{I}$ has contributed data $d_i$ (measured in normalized units, 
e.g., number of mobility records or session logs). Under \ac{gdpr} Article 17, 
users may request data deletion, triggering machine unlearning obligations. 
The following cost and utility models capture the economic tradeoffs.

\subsection{Server Cost Model}
Considering the cost model in \cite{CC2024price} that $C(x)=\alpha A(x)+ \beta T(x)$, where $x=\sum\limits_{i=1}^I x_i$, $x_i$ is the amount of data redemption for user $i\in\mathcal{I}$, $A(x)$ is the accuracy degradation of the model caused by unlearning, and $T(x)$ the computing time for executing the unlearning task. Note that $\alpha,A,\beta,T$ are all non-negative. Defining the total data amount for each user $i$ as $d_i$, and their sum $d=\sum\limits_{i=1}^Id_i$, according to \cite{CC2024price}:
\begin{align}
	A(x)&=A_1a^{A_2x}-A_3,\label{eq:A(x)}\\
	T(x)&=\begin{cases}
		0 & \text{if } x=0,\\
		T_0(d-x) & \text{if } x\in(0,d].
	\end{cases}\label{eq:T(x)}
\end{align}

From the server's perspective it is more convenient to focus on the amount of data to keep, i.e. to buy from the users, rather than the amount of data to redeem. So we define that $y_i=d_i-x_i$ and $y=d-x$, and rewrite the cost function as
\begin{align}
	C(y)&=\alpha A(y)+\beta T(y),\\
	A(y)&=A_1e^{A_2(d-y)}-A_3,\\
	T(y)&=\begin{cases}
		T_0y & \text{if } y\in[0,d),\\
		0 & \text{if } y=d.
	\end{cases}
\end{align}

\subsection{User Privacy Model}
Following \cite{CC2024price} we consider that for each user $i\in\mathcal{I}$ the utility of privacy is captured by
\begin{equation}
	U_i(x_i)=\lambda_i\ln(x_i+1),
\end{equation} 
where $\lambda_i$ is its privacy parameter. Similarly, focusing on the amount of data to keep on the server, we rewrite it As
\begin{equation}
	U_i(y_i)=\lambda_i\ln(d_i-y_i+1),
\end{equation}
which is concave and monotonically decreasing w.r.t. $y_i$.

\section{Incentive Analyses}\label{sec:analyses}
\subsection{Server's Incentive to Purchase Data}

The server has an incentive to keep as much data as to minimize the cost of unlearning. We denote this target amount of data to keep as $y\subscript{max}$. Note that $C(y)$ has one and only one jump discontinuity at $y=d$, for simplification of discussion we first focus on the continuous interval $y\in[0,d)$, and leave the case of $y=d$ for later discussion in Sec.~\ref{subsec:discontinuity}. Thus, $C(y)$ is in general a convex function over $y\in[0,d)$, while its monoticity, depending on the specific parameter settings, may have three diffferent cases, as shown in Fig.~\ref{fig:server_cost_cases}:
\begin{enumerate}[label=\arabic*):]
	\item monotonically increasing,
	\item monotonically decreasing, or
	\item non-monotone.
\end{enumerate}
\begin{figure}
	\centering
	\includegraphics[width=\linewidth]{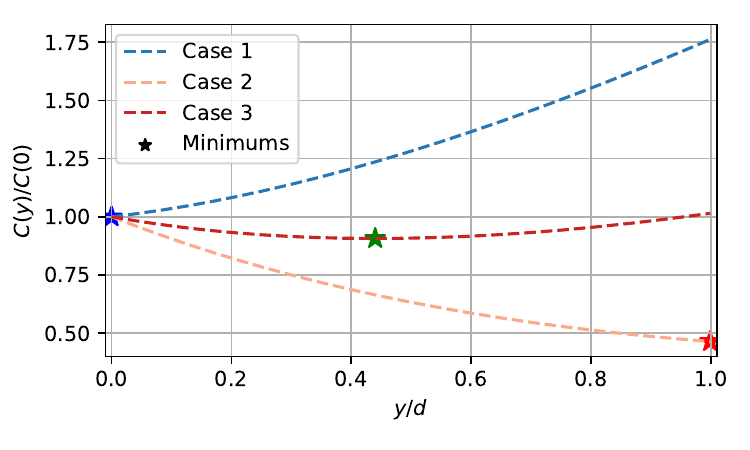}
	\caption{Different cases of the server's cost function.}
	\label{fig:server_cost_cases}
\end{figure}
In the first case, trivially, the server will not keep any data and $y\subscript{max}=0$. In the second case, the server tend to keep all data and $y\subscript{max}\to d^-$. In the third case, $y\subscript{max}$ can be obtained by looking for the zero derivative:
\begin{align}
	\left.\frac{\diff C(y)}{\diff y}\right\vert_{y\subscript{max}}&=(-\alpha A_1A_2\ln a)a^{A_2(d-y\subscript{max})}+\beta T_0=0,\\
	\Rightarrow y\subscript{max}&=d+\frac{1}{A_2}\log_a\left(\frac{\alpha A_1A_2\ln a}{\beta T_0}\right).\label{eq:ymax}
\end{align}

Now assume that the server has already managed to keep $y<y\subscript{max}$ amount of data, and offers to buy more data from users at a price $B$ per unit amount of data. The server's \emph{demand} $\eta$ will be a function of the condition $(y,B)$, which is:
\begin{align}
		&\eta(y,B)=\underset{\delta\in[0,y\subscript{max}-y]}{\arg\max}\left[C(y)-C(y+\delta)-B\delta\right]\label{eq:demand}\\
		=&\underset{\delta\in[0,y\subscript{max}-y]}{\arg\max}\left[\alpha A_1e^{A_2(d-y)}(1-e^{-A_2\delta)})-(\beta T_0+B)\delta\right].\nonumber
\end{align}

Especially, noticing that $C(y)$ is always monotonic over the interval $y\in[0,y\subscript{max})$, there is a unique maximum price $B\subscript{all}(y)$ to support a ``buy all'' strategy:
\begin{equation}
	\begin{split}
		&B\subscript{all}(y)=\frac{C(y\subscript{max})-C(y)}{y\subscript{max}-y}\\
		=&\frac{\alpha A_1e^{A_2d}\left(e^{-A_2y\subscript{max}}-e^{-A_2y}\right)}{y\subscript{max}-y}+\beta T_0.
	\end{split}
\end{equation}
With $y$ data already bought from the users, given any price $B\leqslant B\subscript{all}(y)$, the server will buy as much data from the users as possible, until it reaches the maximum amount $y\subscript{max}$ of data to keep. Note that in case of $y$-increasing cost $C$ (Case 1 in Fig.~\ref{fig:server_cost_cases}), $B\subscript{all}\leqslant 0$ and $y\subscript{max}=0$.

\subsection{User's Incentive to Sell Data}
Given the price $B$ offered by the server, each user $i$ can decide how much data it will sell to the server, based on how much data $y_i$ it has already sold previously. The change of privacy utility of selling $\delta_i$ more data is
\begin{equation}
	\begin{split}
		&\Delta U_i(y_i,\delta_i)=U_i(y_i+\delta_i) - U_i(y_i)\\
		=&\lambda_i\ln\left(1-\frac{\delta_i}{d_i-y_i+1}\right),
	\end{split}
\end{equation}
so the minimum price to convince user $i$ to sell $\delta_i$ more data must compensate this utility loss:
\begin{equation}
	\begin{split}
		&B_{\text{min},i}(y_i,\delta_i)=\frac{-\Delta U_i(y_i,\delta_i)}{\delta_i}\\
		=&\frac{\lambda_i}{\delta_i}\ln\left(1+\frac{\delta_i}{d_i-y_i+1-\delta_i}\right).
	\end{split}
\end{equation}
It is trivial that for all $y_i\in[0,d_i)$ and $\delta_i\geqslant 0$:
\begin{align}
	\frac{\partial}{\partial y_i}B_{\text{min},i}(y_i,\delta_i)&>0,\\
	\frac{\partial}{\partial \delta_i}B_{\text{min},i}(y_i,\delta_i)&>0,
\end{align}
so there is a minimal price for user $i$ to sell any amount of data in addition to the already-sold $y_i$:
\begin{equation}
	\lim\limits_{\delta_i\to0}B_{\text{min},i}(y_i,\delta_i)=\frac{\lambda_i}{d_i-y_i+1},
\end{equation}
and a minimal price for user $i$ to sell all the remaining data:
\begin{equation}
	B_{\text{all},i}(y_i)=B_{\text{min},i}(y_i,d_i-y_i)=\frac{\lambda_i}{d_i-y_i}\ln(d_i-y_i+1).  
\end{equation}

Moreover, given that user $i$ has already sold $y_i$ in priori, there is a maximal amount of data $q_i$ it is willing to sell to the server at price $p$, which we call $i$'s \emph{supply}:
\begin{equation}
	\begin{split}
		&q_i(y_i,B)=\underset{\delta\in[0,d_i-y_i]}{\arg\max}\left[p\delta+\Delta U_i(y_i,\delta)\right].\\
		=&\underset{\delta\in[0,d_i-y_i]}{\arg\max}\left[p\delta+\lambda_i\ln\left(1-\frac{\delta}{d_i-y_i+1}\right)\right].
	\end{split}   \label{equ:optimalselling}
\end{equation}

\section{Quotation-Based Price Discovery Mechanism}\label{sec:approach}
\subsection{Quotation-Based Price Discovery Protocol}
To balance the server's cost of unlearning and the users' privacy utility,
we propose an iterative price discovery mechanism through sequential
quotations. The protocol operates as follows:
\begin{enumerate}
	\item  At the beginning of each round $t$, the server updates its price $B^t$ for buying unit data from the users. We focus in this study on the ascending quotation format, where the data price is increasing w.r.t. time, i.e., for all $t<\tau$, $B^t<B^\tau$. It also estimates its demand for data $\eta^t=\eta(y^t, B^t)$, which is \emph{unknown} to the users. If the demand drops to zero, the quotation is terminated. Otherwise, the price is quoted to all users.\label{step:quote_price}
	\item Each user then independently decides how much data it can supply at this price.
	\item The server purchases the data from all users who are willing to sell at the quoted price. If the total supply exceeds its demand, it will only purchase its demanded amount of data. Otherwise, it will buy all data supplied.
	\item If the server has collected the target amount of data $y\subscript{max}$, the quotation is terminated. Otherwise, $t\gets t+1$, and go back to step \ref{step:quote_price}) for the next round.
\end{enumerate}

For the sake of low communication overhead and privacy protection, our approach is to let the users make decisions independently from each other. Thus, each user $i$ is considered \emph{blind} in the sense that at each period $t$, $i$ only knows the current data price $B^t$, its own privacy parameter $\lambda_i$, its own data amount $d_i$, and its own history of data selling $(y_i^\tau)_{\tau<t}$. As we are considering an ascending quotation, the price quoted by the server along the time assembles an ascending series $(B^\tau)_{1\leqslant\tau<\infty}$. 

\subsection{Users' Selling Strategy}
Noticing the $t$-monotonically increasing price $B^t$, when a user has incentive to sell data in round $t$, it has also the alternative option of keeping the data to sell it at higher price in a future round. Herewith we discuss the optimal selling strategy of users in such multi-round quotation.

Suppose the quotation is now at the beginning of round $t$. Each user $i$'s current remaining data that can be sold is $x_i^t\geqslant 0$. Given the information that $i$ knows, $i$ forms a belief $\mu_i^t\in [0,1]$, which represents a \emph{subjective} probability that the quotation is terminated after round $t$. Thus, $\mu_i^t$ is a function about server's remaining demand $\eta^t$ at $t$, and the supply $q_j^t$ of all users $j\in\mathcal{I}$. 

Moreover, at $t$, since $i$ only knows its own history $(y_i^\tau)_{\tau<t}$, and that $y_j^\tau$ is non-decreasing about $\tau$ for all $j\in\mathcal{I}$, $\mu_i^t$ is increasing with $i$'s own sold data $y_i^t$, and the collection $(\mu_i^\tau)_{1\leqslant \tau<\infty}$ is (point-wisely) decreasing with $\tau$, for any $(y_i^\tau)_{1\leqslant \tau<\infty}$.

Let $t_i$ be the first period that $i$ decides to sell, where according to the data price $B^{t_i}$, the optimal amount of selling for $i$ is 
automatically given by (\ref{equ:optimalselling}). Denote this amount of data be $q_i^{t_i}=q_i(0,B^{t_i})>0$. If $q_i^{t_i}=d_i$, it is straightfoward. Otherwise, supposing $q_i^{t_i}<d_i$, by the section of $t_i$ we have
\begin{equation}
	\mu_i^{t_i}(q_i^{t_i})\cdot V(q_i^{t_i},B^{t_i})\geqslant [1-\mu_i^{t_i}(0)] \cdot V(q_i^{t_i+1},B^{t_i+1}),\label{eq:better_sell_than_keep}
\end{equation} 
where $V(y,B)$ is $i$'s benefit by selling the amount of data $y$ with price $B$, according to his utility. Thus, Eq.~\eqref{eq:better_sell_than_keep} implies that the expected payoff of selling at $t_i$ is higher than the expected payoff of keeping data at $t_i$. In other words, as long as the users are blind about the server's demand and other users' trade record, they always perform the greedy strategy in selling data.

Furthermore, rearranging Eq.~\eqref{eq:better_sell_than_keep}, we have
\begin{equation}
	\frac{  \mu_i^{t_i} (q_i^{t_i})  }{[1-\mu_i^{t_i}(0)] }\geqslant \frac{V(q_i^{t_i+1},B^{t_i+1})}{V(q_i^{t_i},B^{t_i})}.	
\end{equation}
With the aforementioned monotonicity of $\mu_i$, we also have
\begin{equation}
	\frac{  \mu_i^{t_i+1} (q_i^{t_i+1})  }{[1-\mu_i^{t_i+1}(0)] }\geqslant \frac{  \mu_i^{t_i} (q_i^{t_i})  }{[1-\mu_i^{t_i}(0)] }.
\end{equation}
By the definitions of $q$ and $V$, $i$'s marginal benefit is decreasing, and hence
\begin{equation}
	\frac{V(q_i^{t_i+1},B^{t_i+1})}{V(q_i^{t_i},B^{t_i})}\geqslant \frac{V(q_i^{t_i+2},B^{t_i+2})}{V(q_i^{t_i+1},B^{t_i+1})}.	
\end{equation}
Thus, we can obtain that
\begin{equation}
	\frac{  \mu_i^{t_i+1} (q_i^{t_i+1})  }{[1-\mu_i^{t_i+1}(0)] }\geqslant  \frac{V(q_i^{t_i+2},B^{t_i+2})}{V(q_i^{t_i+1},B^{t_i+1})}.\label{eq:once_sell_never_back}
\end{equation}
Eq.~\eqref{eq:once_sell_never_back} implies, if $i$ starts to sell data at $t_i$, it will also choose to sell data in all future rounds, until having all its data sold.

\subsection{Bayesian Belief Model}\label{subsec:bayesian_belief}

The greedy selling strategy above relies on the belief $\mu_i^t$---the subjective probability that the quotation terminates after round $t$. We now formalize how a rational user forms and updates this belief through Bayesian learning.

\subsubsection{Why Termination Price, Not Demand}

Recall that at round $t$, user $i$ observes only the current price $B^t$ and whether the quotation is still active. Since the ascending price schedule is deterministic, knowing $B^t$ is equivalent to knowing~$t$. The only informative signal is the \emph{continuation event}: the fact that the server has not yet terminated the quotation.

A user cannot directly estimate the server's remaining demand $\eta^t$, as this depends on the server's private cost function $C(\cdot)$, the aggregate supply from all other users, and the total data endowment $d$---none of which is observable. However, the user \emph{can} form beliefs about the \emph{termination price} $B^T$---the price at which the server's demand is fully satisfied and the quotation ends---because:
\begin{itemize}
    \item At each round $t$ where the quotation remains active, the user learns that $B^T > B^t$, i.e., the termination price exceeds the current price.
    \item When the quotation terminates at some round $T$, the user observes $B^T = B^0 + T \cdot \Delta B$ exactly.
\end{itemize}
Thus, $B^T$ is a scalar quantity on which the user receives a sequence of increasingly informative signals, making Bayesian estimation both feasible and natural.

\subsubsection{Intra-Event Learning}\label{subsubsec:intra_event}

At the start of a redemption event, user $i$ holds a prior belief over the termination price:
\begin{equation}\label{eq:prior_BT}
    B^T \sim F_i^0,
\end{equation}
where $F_i^0$ is a cumulative distribution function on $[B^0, \infty)$. After observing that the quotation is still active at round $t$ (equivalently, $B^T > B^t$), user $i$ updates via Bayes' rule:
\begin{equation}\label{eq:posterior_BT}
    F_i^t(b) = \Pr(B^T \!\leqslant\! b \mid B^T \!>\! B^t) = \frac{F_i^0(b) - F_i^0(B^t)}{1 - F_i^0(B^t)},\ b > B^t.
\end{equation}
This is simply the left-truncation of the prior at $B^t$. The belief $\mu_i^t$ introduced above---the subjective probability that the quotation terminates after round $t$---is now given by
\begin{equation}\label{eq:mu_bayesian}
    \mu_i^t = \Pr(B^T \leqslant B^{t+1} \mid B^T > B^t) = \frac{f_i^0(B^t) \cdot \Delta B}{1 - F_i^0(B^t)} + o(\Delta B),
\end{equation}
where $f_i^0$ is the density of the prior. For small $\Delta B$, this is approximately the \emph{hazard rate} of the prior distribution evaluated at $B^t$\footnote{The hazard rate $h_i(b)$ is the reciprocal of the term $\frac{1-F(b)}{f(b)}$ that appears in Myerson's \emph{virtual value} $\phi(b) = b - 1/h(b)$ from optimal mechanism design~\cite{myerson1981optimal}. While structurally related, the two quantities play different roles: Myerson's virtual value is used by a mechanism designer to screen buyer types, whereas here the hazard rate drives the user's own selling-timing decision under subjective beliefs about the termination price.}:
\begin{equation}\label{eq:hazard_rate}
    \mu_i^t \approx h_i(B^t) \cdot \Delta B, \qquad h_i(b) \triangleq \frac{f_i^0(b)}{1 - F_i^0(b)}.
\end{equation}

\begin{remark}[Monotonicity and Hazard Rate]\label{rmk:monotonicity_DHR}
    The monotonicity of $\mu_i^t$ assumed above (decreasing over time for fixed selling history) corresponds to the prior $F_i^0$ having a \ac{dhr}. This holds for many common distributions, including Pareto, log-normal (in certain parameter ranges), and mixture distributions. Economically, DHR means that conditional on the quotation surviving to round $t$, the probability of termination in the next round decreases---reflecting the intuition that a quotation that has lasted long is likely to last longer.
\end{remark}

\subsubsection{Reinterpretation of Greedy Strategy}

The selling condition \eqref{eq:better_sell_than_keep} can be reinterpreted through the Bayesian lens:
\begin{equation}\label{eq:bayesian_sell_condition}
    h_i(B^{t_i}) \cdot V(q_i^{t_i}, B^{t_i}) \geqslant V(q_i^{t_i+1}, B^{t_i+1}) \cdot \left[1 - h_i(B^{t_i}) \cdot \Delta B\right],
\end{equation}
making the hazard rate $h_i(B^{t_i})$ the key determinant of selling timing. Under \ac{dhr} priors, the hazard rate decreases in $B^t$, so the greedy strategy (once-start-never-stop) from Eq.~\eqref{eq:once_sell_never_back} remains optimal. Under increasing hazard rate priors, users may marginally benefit from waiting, but the effect vanishes as $O(\Delta B)$ for small price increments.

\subsubsection{Inter-Event Learning}\label{subsubsec:inter_event}

When the server runs multiple redemption events over time (e.g., after model updates or new data collection cycles), users can refine their prior $F_i^0$ from past observations.

\textbf{Assumptions:} We require two conditions for cross-event learning to be meaningful:
\begin{enumerate}[label=(\alph*)]
    \item \emph{Stationary server type}: The server's cost structure $(A(\cdot), T(\cdot), \alpha, \beta)$ is approximately stable across events. This is reasonable when the server's model architecture and retraining infrastructure do not change drastically.
    \item \emph{Approximate population stability}: The user pool and its aggregate characteristics (distribution of $\lambda_i$, $\theta_i$, $d_i$) do not shift dramatically between events.
\end{enumerate}
Under~(a) and~(b), similar cost structure and user pool produce similar optimal data retention $y\subscript{max}$ and hence similar termination prices $B^T$ across events.

\textbf{Learning mechanism:} Suppose user $i$ has participated in $n$ past redemption events, observing realized termination prices $B^{T_1}, B^{T_2}, \ldots, B^{T_n}$. The user forms the prior for event $n+1$ by fitting $F_i^0$ to this sample. For instance, under a parametric model $B^T \sim \text{Gamma}(\alpha_0, \beta_0)$, standard conjugate Bayesian updating yields closed-form posteriors.

\textbf{Data accumulation dynamics:} Between redemption events, each user $i$ continues to generate new data that is incorporated into the server's model through normal operation. Let $g_i^{n}$ denote the amount of new data generated by user~$i$ during the interval between events $n$ and $n+1$. Then the data endowment at event $n+1$ is
\begin{equation}\label{eq:data_dynamics}
    d_i^{n+1} = y_i^n + g_i^{n},
\end{equation}
where $y_i^n$ is the data retained by the server from event $n$ (which remains in the model), and $g_i^{n}$ is the newly accumulated data. Data that was unlearned in event~$n$ (i.e., $d_i^n - y_i^n$) has been removed from the model and does not reappear unless the user re-consents.

This creates a dynamic linkage between events: a user who sells more data in event~$n$ (higher $y_i^n$) enters event $n+1$ with a larger data endowment, and hence potentially higher privacy stakes. Combined with the refined prior $F_i^0$ from the learning mechanism above, the user's selling threshold evolves across events---capturing both \emph{informational learning} (about $B^T$) and \emph{strategic positioning} (through data accumulation).

\begin{remark}[Event Frequency and Strategic Patience]\label{rmk:frequency}
    The frequency of redemption events does \emph{not} directly inform the user about the termination price $B^T$ within any given event. A server running quarterly events may reach the same $B^T$ as one running annually. However, event frequency affects the user's \emph{strategic patience} through two channels. First, frequent events reduce the temporal discounting of future selling revenue: the user can participate in a nearby future event rather than waiting a long time, lowering the opportunity cost of not selling now. Second, since each user's within-event supply~\eqref{equ:optimalselling} depends on the data endowment~$d_i$, and this endowment evolves across events via~\eqref{eq:data_dynamics}, more frequent events mean smaller per-interval data accumulation $g_i^n$, leading to a different supply profile at each subsequent event. Both effects influence supply behavior but are distinct from the belief about $B^T$.
\end{remark}

\subsection{Handling the Oversupply}\label{subsec:oversupply}
Due to the wide-sense $B$-monotonicity of $\eta$ (decreasing) and $q_i$ (increasing) for all $i\in\mathcal{I}$, when the quoted price $B^t$ is sufficiently high, it may occur the case of $\eta^t<\sum\limits_{i=1}^Iq_i^t$, i.e. the total supply from users exceeds the server's demand. In this case, the server shall allocate its demand $\eta$ to the different selling users.

While neither the unit price $B^t$ nor the demand $\eta^t$ is influenced by the allocation, the server's purchase is independent therefrom, either. However, since the different users are of different characteristics $(y_i, \lambda_i)$, the total payoff of users will be depending on the specific allocation of purchase. To perform an optimal allocation that maximizes the users' payoff, the server needs full knowledge about the privacy parameter $\lambda_i$ of each selling user $i$, which is assumed unavailable in our system model. Therefore, here we propose four privacy-knowledge-free strategies to handle such oversupply issue:
\begin{enumerate}
	\item \emph{Major sellers first}: The server will purchase data from users in a greedy manner and descending order upon their current supply $q_i^t$, until the demand is satisfied.
	\item \emph{Minor sellers first}: The server will purchase data from users in a greedy manner and ascending order upon their current supply $q_i^t$, until the demand is satisfied.
	\item \emph{Proportional}: The server will purchase data from all users, allocating its demand $\eta^t$ to every single selling user proportionally to their current supply $q_i^t$.
	\item \emph{Random order}: The server will purchase data from users in a greedy manner and random order.
\end{enumerate}

Remark that the oversupply will certainly trigger the satisfaction of server's optimal amount of data to keep $y\subscript{max}$, so it will only occur once, i.e., in the last round of the quotation.

\subsection{Handling the Discontinuity in Cost}\label{subsec:discontinuity}
It must be recalled that the server's cost function $C(y)$ has a jump discontinuity at $y=d$, which is so far not considered in the above analyses.

When $y\subscript{max}<d$, due to this jump discontinuity, the actual server's cost of unlearning all user data of amount $d$ is $C(d)=\alpha A(d)$, which can be significantly lower than $C(y\subscript{max})$. Thus, after achieving $y^t=y\subscript{max}$, the server may still have an incentive to see if it worth buying all the remaining data from the users. Nevertheless, since the server does not have full knowledge about every user's privacy parameter, it cannot simply estimate the optimal price to offer. It is neither rational to simply continue purchasing more data at prices higher than the marginal cost of unlearning. To address this issue, we propose a post-quotation procedure, where after buying $y\subscript{max}$ amount of data, the server will still keep announcing new prices to the users like if the quotation continues, \emph{but not purchasing}. Instead, it observes if all users are willing to sell all remaining data at the new price. Only if so, the server will purchase all remaining data at the last quoted price, and terminates the post-quotation step. Otherwise, no purchase will be made even if the supply is non-zero, and the server simply updates the price. This procedure continues until the price $B^t$ is raised so high that the cost to buy all remaining data exceeds the unlearning cost saved therewith, i.e. when $B^t\cdot\left(d-y\subscript{max}\right)>C\left(y\subscript{max}\right)-C(d)$.

\subsection{Algorithm Implementation}
Summarizing the above discussions, we propose an iterative quotation-based price discovery protocol in Algorithm~\ref{alg:quotation}. Especially, for practical implementation in real use scenarios, we consider discrete user data sets with unit size of $\Delta d$, so that all trades of user data can only be executed by amount of integer times of $\Delta d$.

\SetKwProg{Pn}{Function}{:}{}%
\SetKwFunction{FMain}{Main}
\begin{algorithm}[!htpb]
	\caption{Iterative Price Discovery Protocol}
	\label{alg:quotation}
	\scriptsize
	\DontPrintSemicolon
	\textbf{Input:} $a, A_1, A_2, A_3, T_0, \alpha, \beta, B^0, \Delta B, [d_1,d_2,\dots, d_I], \Delta d$,\;
	\textbf{Initialize:} $y^0\gets 0$, $t\gets 0$, $y\subscript{max}$ w.r.t. Eq.~\eqref{eq:ymax}, $\eta^0$ w.r.t. Eq.~\eqref{eq:demand}\;
	\While(\tcp*[f]{Quotation phase}){$\eta^t>0$}{
		Update $q_i^t$ for all $i\in\mathcal{I}$ w.r.t. Eq.~\eqref{equ:optimalselling}\;
		$q_i^t\gets \left\lfloor q_i^t/\Delta d\right\rfloor*\Delta d$\;
		\uIf{$\sum\limits_{i=1}^I q_i^t<\eta^t$}{
			$y_i^{t+1}\gets y_i^t+q_i^{t}$ for all $i\in\mathcal{I}$\;
		}
		\Else{
			Update $y_i^{t+1}$ for all $i\in\mathcal{I}$ upon the oversupply-handling strategy\;
		}
		$B^{t+1}\gets B^t+\Delta B$\;
		$\eta^{t+1}\gets \eta\left(y^{t+1},B^{t+1}\right)$\;
		$t\gets t+1$\;
	}
	\While(\tcp*[f]{Post-quotation phase}){$B^t\cdot\left(d-y^t\right)\leqslant C\left(y^t\right)-C(d)$}{
		Update $q_i^t$ for all $i\in\mathcal{I}$ w.r.t. Eq.~\eqref{equ:optimalselling}\;
		\If{$\sum\limits_{i=1}^I q_i^t=d-y$}{
			$y_i^{t+1}\gets d_i$ for all $i\in\mathcal{I}$\;
		}
		$B^{t+1}\gets B^t+\Delta B$\;
		$t\gets t+1$\;
	}
\end{algorithm}


\section{Extension to Accuracy-Aware Users}\label{sec:accuracy}

In Sections~\ref{sec:analyses}--\ref{sec:approach}, users' decisions are driven solely by privacy concerns, and each user acts independently without regard to other users' decisions. In practice, however, users also benefit from the server's model accuracy---for instance, an \ac{mno}'s \ac{llm} provides better traffic predictions and service quality when trained on more data. In this section, we extend the model following \cite{CC2025price} so that each user also accounts for the impact of aggregate data redemption on model accuracy.

We first analyze this extended model under complete information (where users can observe all preceding transactions) to establish a theoretical benchmark. Subsequently, in Sec.~\ref{sec:efficiency}, we compare this benchmark with the practical incomplete-information quotation mechanism of Sec.~\ref{sec:approach}, as well as with the optimal personalized pricing of \cite{CC2025price}.

\subsection{Extended User Model}\label{subsec:extended_model}

We generalize the user privacy function following \cite{CC2025price}. For each user $i\in\mathcal{I}$, the privacy utility of redeeming $x_i = d_i - y_i$ amount of data is
\begin{equation}\label{eq:general_privacy}
	P_i(x_i) = \begin{cases}
		\displaystyle\frac{\lambda_i(x_i+1)^{1-k_i}}{1-k_i}, & \text{if } k_i \in [0,1),\\[6pt]
		\lambda_i\ln(x_i+1), & \text{if } k_i = 1,
	\end{cases}
\end{equation}
where $\lambda_i \geqslant 0$ is user $i$'s privacy parameter and $k_i \in [0,1]$ is the privacy elasticity parameter. The model in Sec.~\ref{sec:model} corresponds to the special case $k_i = 1$ for all $i \in \mathcal{I}$.

Each user $i$ is also characterized by an \emph{accuracy parameter} $\theta_i \geqslant 0$ that captures $i$'s sensitivity to model accuracy degradation; a larger $\theta_i$ means a stronger preference for maintaining model quality.

Given a unit price $B$ offered by the server, user $i$'s payoff from selling $y_i$ amount of data to the server is
\begin{equation}\label{eq:accuracy_payoff}
	W_i(y_i, \mathbf{y}_{-i}, B) = P_i(d_i - y_i) + B \cdot y_i - \theta_i A\!\left(\textstyle\sum_{j=1}^I (d_j - y_j)\right),
\end{equation}
where $P_i(d_i - y_i)$ is the privacy utility from the redeemed portion, $B \cdot y_i$ is the compensation received for data retention, and $\theta_i A(\cdot)$ captures the disutility from accuracy degradation caused by the \emph{aggregate} data redemption across all users.

\begin{remark}
	Setting $\theta_i = 0$ and $k_i = 1$ for all $i \in \mathcal{I}$ recovers the privacy-only model analyzed in Sections~\ref{sec:analyses}--\ref{sec:approach}.
\end{remark}

\begin{remark}\label{rmk:cc2025_comparison}
	The payoff structure in \eqref{eq:accuracy_payoff} is equivalent to \cite[Eq.~(7)]{CC2025price}. The critical modeling difference is that in \cite{CC2025price}, the server sets \emph{personalized} prices $b_i$ for each user based on knowledge of $(\lambda_i, k_i, \theta_i)$, whereas in our framework, the server broadcasts a \emph{uniform} price $B$ without access to any user's private parameters.
\end{remark}

Note that when $\theta_i > 0$, user $i$'s payoff depends on the aggregate redemption of \emph{all} users through the accuracy term $\theta_i A(\cdot)$. This creates a strategic interdependence among users that does not exist in the privacy-only model. We analyze this interdependence under complete information in the remainder of this section.

\subsection{Complete-Information Benchmark}\label{subsec:complete_info}
To establish a theoretical benchmark, we first consider the setting where users have complete information: each user can observe all preceding transactions and anticipate subsequent users' responses. This allows us to characterize the game-theoretic equilibrium under ascending quotation prices. The welfare comparison between this complete-information quotation (CIQ) and the practical incomplete-information quotation (IIQ) is not straightforward---it depends on the oversupply-handling strategy and the accuracy externality structure, as we formally analyze in Sec.~\ref{sec:efficiency}.

Consider the quotation protocol of Sec.~\ref{sec:approach} applied to accuracy-aware users. Following Algorithm~\ref{alg:quotation}, the server quotes an ascending price sequence
\begin{equation}\label{eq:price_schedule}
	B^t = B^0 + t \cdot \Delta B, \qquad t = 0, 1, 2, \ldots,
\end{equation}
where $B^0 \geqslant 0$ is the initial price and $\Delta B > 0$ is the price increment. This schedule is deterministic and known to all participants. At each period $t$, each user $i$ decides the amount $y_i^t \geqslant 0$ to sell, subject to $\sum_{\tau=1}^{t} y_i^{\tau} \leqslant d_i$.

Suppose the quotation terminates at period $T$. Denoting $i$'s total data sold as $\bar{y}_i \triangleq \sum_{t=1}^{T} y_i^t$ and the system-wide total as $\bar{y} \triangleq \sum_{i=1}^{I} \bar{y}_i$, user $i$'s realized payoff is
\begin{equation}\label{eq:realized_payoff}
	W_i = P_i(d_i - \bar{y}_i) + \sum_{t=1}^{T} y_i^t \cdot B^t - \theta_i A\!\left(d - \bar{y}\right).
\end{equation}

\subsection{Single-Period Selling Property}\label{subsec:single_period}

The ascending price structure induces a strong structural property on users' optimal strategies.

\begin{theorem}[Single-Period Selling]\label{thm:single_period}
	Under an ascending price schedule $(B^t)_{t\geqslant 1}$, for any fixed strategy profile $\mathbf{y}_{-i}$ of all other users, it is a strictly dominant strategy for each user $i$ to sell data in at most one period. That is, there exists $t_i \in \{1, 2, \ldots, T\} \cup \{\infty\}$ such that $y_i^t = 0$ for all $t \neq t_i$.
\end{theorem}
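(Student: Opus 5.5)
The plan is an exchange argument on the realized payoff \eqref{eq:realized_payoff}. Fix $\mathbf{y}_{-i}$ and suppose some strategy of user $i$ sells positive amounts in two or more periods. We build an alternative strategy that sells the same total $\bar{y}_i$ in a single period and show it yields strictly higher payoff.

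The key observation is how \eqref{eq:realized_payoff} depends on user $i$'s schedule $(y_i^t)_t$. Only the middle term $\sum_t y_i^t B^t$ depends on the timing of sales. The privacy term $P_i(d_i-\bar{y}_i)$ depends only on the total $\bar{y}_i$. The accuracy term $\theta_i A(d-\bar{y})$ depends only on the system-wide total $\bar{y}=\bar{y}_i+\sum_{j\neq i}\bar{y}_j$, and the second summand is fixed by $\mathbf{y}_{-i}$. So for a fixed total, user $i$ should maximize $\sum_t y_i^t B^t$ subject to $\sum_t y_i^t=\bar{y}_i$ and $y_i^t\geqslant 0$. This is a linear program over a simplex, and its optimum is at a vertex, namely all mass in the period with the largest price. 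Let $t^\ast$ be the last period in which $i$ sells under the original strategy. Because $B^t$ is strictly increasing in $t$, moving all earlier sales to $t^\ast$ raises revenue by $\sum_{t<t^\ast} y_i^t(B^{t^\ast}-B^t)>0$ whenever some earlier $y_i^t>0$. The gain is strict, which gives strict dominance. The feasibility constraint $\sum_\tau y_i^\tau\leqslant d_i$ is preserved, since the total is unchanged. A user who never sells corresponds to $t_i=\infty$.

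The main obstacle is whether consolidating sales can change the termination time $T$ or the amounts the server actually accepts. Once other users' strategies and the accepted quantities are held fixed, both depend on cumulative supply. Two points need checking. First, delaying sales to $t^\ast$ must not cause the quotation to stop before $t^\ast$. We handle this by treating the strategy profile of others as fixed, as in the statement. The server's demand $\eta(y,B)$ is decreasing in $B$, but its remaining demand at $t^\ast$ is weakly larger when $i$ has sold less before $t^\ast$. So the quotation is still active at $t^\ast$ and can absorb the consolidated amount, up to rationing in the oversupply round. Second, the oversupply round can occur only in the final period, as noted in Sec.~\ref{subsec:oversupply}, so it cannot bind before $t^\ast\leqslant T$. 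If rationing at $t^\ast$ cuts the accepted amount, we compare the two strategies through the executed quantities. We argue that whatever quantity is ultimately executed can be executed at the weakly highest price attained.

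These steps give the conclusion that $y_i^t=0$ for all $t\neq t_i$ with $t_i=t^\ast$. The most care goes into this termination and rationing bookkeeping, as opposed to the linear revenue comparison itself.
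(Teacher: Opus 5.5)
Your argument is essentially the paper's: it splits \eqref{eq:realized_payoff} into the timing-independent part $\Phi_i(\bar{y}_i)=P_i(d_i-\bar{y}_i)-\theta_i A(d-\bar{y}_i-\bar{y}_{-i})$ and the revenue term $\sum_t y_i^t B^t$, then shifts earlier sales into a later, strictly higher-priced period while keeping $\bar{y}_i$ unchanged (you consolidate everything into the last selling period at once, whereas the paper shifts pairwise). The paper's proof does no termination or rationing bookkeeping, since it takes $T$ and the executed $y_i^t$ in \eqref{eq:realized_payoff} as given and sets ties/oversupply aside in Remark~\ref{rmk:simultaneous}, so your final paragraph is not needed; note also that its rationing step is unsupported as written, because consolidating into an oversupplied round can change your rationed share (e.g.\ under proportional allocation) and hence $\bar{y}_i$, which breaks the invariance of $\Phi_i$, and a ``weakly highest price'' would not give strict dominance anyway.
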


\begin{proof}
	Fix user $i$ and any strategy profile of the other users, which determines the aggregate $\bar{y}_{-i} = \sum_{j \neq i} \bar{y}_j$. User $i$'s total data sold, $\bar{y}_i = \sum_t y_i^t$, uniquely determines both the privacy utility $P_i(d_i - \bar{y}_i)$ and the accuracy disutility $\theta_i A(d - \bar{y}_i - \bar{y}_{-i})$. Hence, the payoff \eqref{eq:realized_payoff} decomposes as
	\begin{equation}\label{eq:payoff_decomp}
		W_i = \underbrace{P_i(d_i - \bar{y}_i) - \theta_i A(d - \bar{y}_i - \bar{y}_{-i})}_{\triangleq\; \Phi_i(\bar{y}_i) \;\text{(timing-independent)}} + \underbrace{\sum_{t=1}^{T} y_i^t \cdot B^t}_{\text{(timing-dependent)}}.
	\end{equation}

	Now suppose, towards a contradiction, that $i$ sells positive amounts in two distinct periods $t < \tau$, i.e., $y_i^t > 0$ and $y_i^{\tau} > 0$. Since $B^t < B^{\tau}$,
	\begin{equation}
		y_i^t \cdot B^t + y_i^{\tau} \cdot B^{\tau} < (y_i^t + y_i^{\tau}) \cdot B^{\tau}.
	\end{equation}
	Shifting all sales from period $t$ to period $\tau$ strictly increases the timing-dependent component while leaving $\Phi_i(\bar{y}_i)$ unchanged (since $\bar{y}_i$ is preserved). Hence, selling across multiple periods is strictly dominated.
\end{proof}


\begin{remark}\label{rmk:comparison_privacy_only}
	The quotation protocol here is the same ascending-price mechanism as in Sec.~\ref{sec:approach}; the difference lies in the \emph{information structure}. In the incomplete-information setting of Sec.~\ref{sec:approach}, myopic users sell incrementally across rounds once they start (greedy strategy, cf.\ Eq.~\eqref{eq:once_sell_never_back}). Under the complete-information setting analyzed here, a stronger result holds: the ascending price structure implies that \emph{any} temporal splitting of sales is strictly dominated, so each user concentrates all sales in a single period.
\end{remark}

\subsection{Sequential Game and Equilibrium}\label{subsec:spne}

By Theorem~\ref{thm:single_period}, each active user $i$'s strategy in the complete-information game reduces to a pair $(t_i, \bar{y}_i)$: the period in which to sell and the total amount. Since the price schedule \eqref{eq:price_schedule} is deterministic, choosing a selling period $t_i$ is equivalent to choosing a selling price $B^{t_i} = B^0 + t_i \cdot \Delta B$. The period $t_i$ at which each user sells is \emph{endogenous}---it is determined by the equilibrium, not imposed a priori.

Without loss of generality, we exclude inactive users (who never sell) and reorder the remaining $I$ active users so that $t_1 \leqslant t_2 \leqslant \cdots \leqslant t_I$.

Under this ordering, the game becomes a finite sequential game of perfect information: user $1$ sells first at the lowest price, user $2$ at the next price, and so on, with each user observing all preceding transactions before making their decision.

\begin{remark}[Simultaneous Sellers]\label{rmk:simultaneous}
	If multiple users sell at the same period (i.e., $t_i = t_j$ for some $i \neq j$), they face the same price and form a simultaneous subgame within that round. In the practical mechanism, this is precisely the scenario handled by the oversupply strategies of Sec.~\ref{subsec:oversupply}: when aggregate supply exceeds demand, the server allocates according to one of the four proposed strategies. For the theoretical benchmark, we note that under continuous parameter distributions, exact ties in selling periods occur with probability zero. We therefore proceed under generic distinctness of selling periods, with the understanding that any ties are resolved by the existing oversupply mechanism.
\end{remark}

\begin{theorem}[Existence and Uniqueness of \acs{spne}]\label{thm:spne}
	The sequential game among accuracy-aware users under complete information and an ascending price schedule admits a unique pure-strategy \ac{spne}.
\end{theorem}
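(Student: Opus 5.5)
We prove the theorem by backward induction on the finite sequential game described above. The key property is that each stage has a unique optimal action for the mover. By Theorem~\ref{thm:single_period}, user $i$'s strategy reduces to the total amount $\bar{y}_i$ sold in its period $t_i$, at price $B^{t_i}$. Under the generic-distinctness convention of Remark~\ref{rmk:simultaneous}, the users move in the fixed order $1,\dots,I$, each observing the history $(\bar y_1,\dots,\bar y_{i-1})$. Because the game has finitely many stages with perfect information, it suffices to show two things: every subgame's last mover has a unique best response, and this uniqueness propagates backward. Existence and uniqueness of a pure-strategy \ac{spne} then follow by the standard Kuhn–Zermelo argument.

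\textbf{Step 1 (last mover).} Fix a history $h_{I}=(\bar y_1,\dots,\bar y_{I-1})$. User $I$ maximizes
\begin{equation*}
W_I(\bar y_I)=P_I(d_I-\bar y_I)+B^{t_I}\bar y_I-\theta_I A\big(d-\textstyle\sum_{j<I}\bar y_j-\bar y_I\big)
\end{equation*}
over the compact set $\bar y_I\in[0,d_I]$. For $k_I\in[0,1]$, the map $x\mapsto P_I(x)$ is concave in $x$, and the composition with $x=d_I-\bar y_I$ preserves concavity. The term $A(\cdot)=A_1e^{A_2(\cdot)}-A_3$ is convex in its argument. Since that argument is affine in $\bar y_I$, the term $-\theta_I A$ is concave in $\bar y_I$. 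Hence $W_I$ is continuous and concave, so a maximizer exists. For uniqueness we need strict concavity. When $\lambda_I>0$ and $k_I>0$, the privacy term is strictly concave. When $k_I=0$ we need $\theta_I>0$ and $A_2>0$; otherwise $W_I$ is linear. In that linear case we impose a tie-breaking convention (e.g.\ sell the minimal optimal amount), so the selection stays single-valued. This yields a unique best-response function $\bar y_I^*(h_I)$. By Berge's maximum theorem (single-valued case), it is continuous in $h_I$.

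\textbf{Step 2 (induction).} Suppose users $i+1,\dots,I$ have unique continuous continuation strategies $\bar y_j^*(\cdot)$. User $i$ then faces the one-dimensional problem of choosing $\bar y_i$ to maximize
\begin{equation*}
P_i(d_i-\bar y_i)+B^{t_i}\bar y_i-\theta_i A\big(d-S_{<i}-\bar y_i-R_{>i}(\bar y_i)\big),
\end{equation*}
where $S_{<i}$ is the sold amount from the history and $R_{>i}(\bar y_i)$ is the induced total of subsequent sales. Continuity of this objective on the compact interval gives existence of a maximizer.

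\textbf{Main obstacle.} Uniqueness at Step 2 is the hard part, because $R_{>i}$ can break concavity. My approach is to show that later users' sales act as strategic substitutes with slope in $(-1,0]$. To see this, apply the implicit function theorem to user $j$'s interior first-order condition
\begin{equation*}
P_j'(d_j-\bar y_j)=B^{t_j}+\theta_j A'(\cdot).
\end{equation*}
Raising earlier sales raises the argument of $A$ less than one-for-one, which lowers the marginal accuracy benefit. This gives $\partial \bar y_j^*/\partial \bar y_i\in(-1,0]$. Consequently, $\bar y_i+R_{>i}(\bar y_i)$ is nondecreasing with slope in $[0,1]$. I would then verify that the composite accuracy term remains concave in $\bar y_i$, or at least that the objective is strictly quasi-concave, for example by showing that its derivative crosses zero at most once (a single-crossing argument).

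If full concavity fails, a fallback is available: we can restrict to generic parameters, or adopt a fixed tie-breaking rule, which still delivers a unique pure-strategy \ac{spne} in the Zermelo sense. Finally, we combine the stagewise unique selections into the strategy profile and note that it induces a unique equilibrium path.
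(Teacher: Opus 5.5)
\textbf{The route is the paper's, but the inductive uniqueness step is missing, and the route you propose for it fails in general.} Your architecture matches the paper's: reduce to single-period sales, show strict concavity for the last mover, propagate by backward induction, and obtain existence from the perfect-information structure. Your Step~1 is correct, and more careful than the paper's. The paper claims $P_I''<0$ for all $k_I\in[0,1]$, but $P_I''\equiv 0$ when $k_I=0$, so in that case strictness must come from $\theta_I A''>0$.

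The gap is Step~2, which you leave at ``I would then verify''. The paper's proof only asserts that concavity ``remains'' once $R_{>i}$ is embedded. You have found its weak point, but you do not close it. Your slope bound is right: $\partial\bar y_j^*/\partial s=\theta_jA''/(P_j''-\theta_jA'')\in(-1,0]$ at interior solutions with $P_j''<0$. However, it only gives $g'\in[0,1]$ for $g(\bar y_i)=\bar y_i+R_{>i}(\bar y_i)$. Concavity of $-\theta_iA(d-S_{<i}-g(\bar y_i))$ needs $g$ itself to be concave, and it need not be.

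Here is the failure. Suppose a later user $j$ has $\theta_j>0$ and $B^{t_j}<P_j'(d_j)$, and is pushed to the corner $\bar y_j^*=0$ once $\bar y_i$ exceeds some $\hat y$. Then $g'$ jumps from $1+\partial\bar y_j^*/\partial s<1$ up to $1$. User $i$'s marginal payoff $-P_i'(d_i-\bar y_i)+B^{t_i}+\theta_iA'(d-S_{<i}-g)\,g'$ therefore jumps \emph{upward} by $\theta_iA'\,|\partial\bar y_j^*/\partial s|$ at $\hat y$. On an open set of parameters the marginal is negative just left of $\hat y$ and positive just right of it. Then $\hat y$ is a local minimum lying between two local maxima, and by continuity the two peaks tie for suitable parameter values. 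So neither concavity nor single crossing holds without extra hypotheses, for example conditions that rule out the zero corner for later movers and make $g$ concave.

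\textbf{The fallback does not deliver the stated uniqueness.} A tie-breaking rule is an equilibrium selection. If a mover has two optimal actions at some history, each supports a distinct SPNE; the rule picks one but does not make the SPNE unique. Kuhn--Zermelo gives uniqueness only when no mover is ever indifferent. The same objection applies to your Step~1 convention: with $k_I=0$, $\theta_I=0$ and $B^{t_I}=\lambda_I$, every $\bar y_I\in[0,d_I]$ is optimal. That case has to be excluded by hypothesis (e.g.\ $\theta_i>0$), not handled by convention. Restricting to generic parameters would prove a weaker generic-uniqueness result, not the theorem as stated.

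\textbf{Your existence argument depends on the same missing step.} You get continuity of $R_{>i}$ from Berge's theorem, which needs the best response to be single-valued. Once ties occur, a selection such as ``minimal optimal amount'' is generally discontinuous in the history, so an earlier mover's maximum need not be attained. The paper avoids this by obtaining existence separately, from Harris \cite{harris1985existence}, for perfect-information games with compact action sets and continuous payoffs. You should do the same. Then either add assumptions under which each composite objective has a unique maximizer, or weaken the uniqueness claim.
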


\begin{proof}
	\emph{Existence.} The game is a finite extensive-form game of perfect information. Each user's action space $[0, d_i]$ is compact, and the payoff functions \eqref{eq:realized_payoff} are continuous. By \cite{harris1985existence}, a pure-strategy \ac{spne} (in the sense of \cite{Selten1975spne}) exists.

	\emph{Uniqueness.} We proceed by backward induction. At each decision node, we show that the deciding user's payoff is strictly concave in their own action, guaranteeing a unique best response.

	Consider user $I$ (the last to sell). Given the aggregate predecessor supply $s_{I-1} = \sum_{i < I} \bar{y}_i$, user $I$'s payoff is
	\begin{equation}\label{eq:last_user_payoff}
		V_I(\bar{y}_I;\, s_{I-1}) = P_I(d_I - \bar{y}_I) - \theta_I A(d - \bar{y}_I - s_{I-1}) + \bar{y}_I \cdot B^{t_I}.
	\end{equation}
	The second derivative with respect to $\bar{y}_I$ is
	\begin{equation}\label{eq:concavity_check}
		\frac{\partial^2 V_I}{\partial \bar{y}_I^2} = P_I''(d_I - \bar{y}_I) - \theta_I A''(d - \bar{y}_I - s_{I-1}).
	\end{equation}
	Since $P_I$ is concave ($P_I'' \leqslant 0$) and $A$ is convex ($A'' \geqslant 0$) with $\theta_I \geqslant 0$, we have $\frac{\partial^2 V_I}{\partial \bar{y}_I^2} \leqslant 0$. Strict concavity holds whenever $P_I''(d_I - \bar{y}_I) < 0$ (which is the case for the privacy functions in \eqref{eq:general_privacy} for all $k_i \in [0,1]$ and $\bar{y}_I < d_I$) or $\theta_I > 0$ and $A'' > 0$ (which holds for the exponential accuracy model \eqref{eq:A(x)}).

	Thus, $V_I$ is strictly concave in $\bar{y}_I$ over $[0, d_I]$, implying a unique maximizer $\bar{y}_I^*(s_{I-1})$.

	By substituting $\bar{y}_I^*(s_{I-1})$ into user $(I\!-\!1)$'s problem, the same argument applies: user $(I\!-\!1)$'s payoff, with user $I$'s response embedded, remains strictly concave in $\bar{y}_{I-1}$, yielding a unique best response $\bar{y}_{I-1}^*(s_{I-2})$. Proceeding inductively to user $1$, each step yields a unique best response. Since the backward induction is single-valued at every node, the resulting \ac{spne} is unique.
\end{proof}

\subsection{Backward Induction Characterization}\label{subsec:backward_induction}

The unique \ac{spne} of Theorem~\ref{thm:spne} is characterized constructively as follows.

\textbf{Step $I$ (Last user):} Given aggregate predecessor supply $s_{I-1} \triangleq \sum_{i=1}^{I-1} \bar{y}_i$, user $I$'s optimal supply is
\begin{equation}\label{eq:bi_step_I}
	\bar{y}_I^*(s_{I-1}) = \underset{\bar{y}_I \in [0, d_I]}{\arg\max}\; V_I(\bar{y}_I;\, s_{I-1}),
\end{equation}
which can be obtained from the \ac{foc}
\begin{equation}\label{eq:foc_I}
	-P_I'(d_I - \bar{y}_I) + \theta_I A'(d - \bar{y}_I - s_{I-1}) + B^{t_I} = 0,
\end{equation}
subject to the constraint $\bar{y}_I \in [0, d_I]$.

For the log-privacy case ($k_I = 1$) and exponential accuracy model, the \ac{foc} \eqref{eq:foc_I} specializes to
\begin{equation}\label{eq:foc_I_explicit}
	\frac{\lambda_I}{d_I - \bar{y}_I + 1} + \theta_I A_1 A_2 \ln(a) \cdot a^{A_2(d - \bar{y}_I - s_{I-1})} = B^{t_I},
\end{equation}
where the left-hand side is strictly decreasing in $\bar{y}_I$, confirming uniqueness.

\textbf{Step $i$ (General user, $i = I\!-\!1, I\!-\!2, \ldots, 1$):} Given $s_{i-1} = \sum_{j=1}^{i-1} \bar{y}_j$ and the uniquely determined responses $\bar{y}_{i+1}^*, \ldots, \bar{y}_I^*$ of all subsequent users (which depend on $s_{i-1}$ and $\bar{y}_i$ through the backward induction), user $i$ solves
\begin{multline}\label{eq:bi_step_i}
	\bar{y}_i^*(s_{i-1}) = \underset{\bar{y}_i \in [0, d_i]}{\arg\max}\; P_i(d_i - \bar{y}_i) + \bar{y}_i \cdot B^{t_i}\\
	 - \theta_i A\!\big(d - s_{i-1} - \bar{y}_i - R_i(s_{i-1}, \bar{y}_i)\big),
\end{multline}
where $R_i(s_{i-1}, \bar{y}_i) \triangleq \sum_{j=i+1}^{I} \bar{y}_j^*$ is the aggregate response of all users after $i$, recursively determined by the backward induction.

The \ac{spne} outcome is the profile $(\bar{y}_1^*, \bar{y}_2^*, \ldots, \bar{y}_I^*)$ obtained by evaluating the chain $s_0^* = 0$, $\bar{y}_1^* = \bar{y}_1^*(s_0^*)$, $s_1^* = \bar{y}_1^*$, $\bar{y}_2^* = \bar{y}_2^*(s_1^*)$, and so on.

\begin{corollary}[Reduction of Server's Problem]\label{cor:pricing}
	For the linear price schedule \eqref{eq:price_schedule}, the unique \ac{spne} induces a unique total data supply $\bar{y}^* = \sum_{i=1}^I \bar{y}_i^*$ that depends on the parameters $(B^0, \Delta B)$. Since each user sells at exactly one period, the server's cost minimization over the price schedule reduces to choosing $(B^0, \Delta B)$:
	\begin{equation}\label{eq:server_opt}
		\min_{B^0 \geqslant 0,\; \Delta B > 0} \left[\sum_{i=1}^{I} B^{t_i^*} \cdot \bar{y}_i^*(B^0, \Delta B) + C\!\left(\sum_{i=1}^I \bar{y}_i^*\right)\right],
	\end{equation}
	where $t_i^* = t_i^*(B^0, \Delta B)$ is the endogenous selling period of user $i$ in the \ac{spne} and $B^{t_i^*} = B^0 + t_i^* \cdot \Delta B$.
\end{corollary}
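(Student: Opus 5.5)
The corollary is obtained by composing Theorem~\ref{thm:single_period} and Theorem~\ref{thm:spne} with the server's accounting identity; there is no new analysis. The plan has three steps.

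\emph{Step 1: parametrize everything by $(B^0,\Delta B)$.} Under the linear schedule \eqref{eq:price_schedule}, every price $B^t$ is a deterministic function of $(B^0,\Delta B)$ and $t$. The game described in Sec.~\ref{subsec:spne} is therefore fully specified once $(B^0,\Delta B)$ is fixed, given the users' private types, which are held fixed throughout.

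For each such pair, Theorem~\ref{thm:spne} gives a unique pure-strategy \ac{spne}. Its outcome consists of the endogenous selling periods $t_i^*$ and amounts $\bar y_i^*$, both obtained through the backward-induction chain of Sec.~\ref{subsec:backward_induction}. Uniqueness means the map $(B^0,\Delta B)\mapsto(t_i^*,\bar y_i^*)_{i}$ is single-valued, and hence the total $\bar y^*=\sum_i\bar y_i^*$ is well defined and unique.

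\emph{Step 2: collapse the server's outlay.} A general schedule would require summing $\sum_i\sum_t y_i^t B^t$ over all periods. By Theorem~\ref{thm:single_period}, in equilibrium $y_i^t=0$ for every $t\neq t_i^*$. The outlay therefore collapses to $\sum_i B^{t_i^*}\bar y_i^*$ with $B^{t_i^*}=B^0+t_i^*\Delta B$.

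The server's total cost is this outlay plus the unlearning cost $C(\bar y^*)$ of the retained amount. Substituting the equilibrium maps from Step~1 gives the objective in \eqref{eq:server_opt}. The server's only remaining choice within the linear family is $(B^0,\Delta B)$ on the domain $B^0\geqslant 0$, $\Delta B>0$, so the infinite-dimensional choice over price sequences reduces to this two-parameter problem.

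\emph{Step 3: handle ties.} The main subtlety is that Theorem~\ref{thm:spne} assumes distinct selling periods. When the equilibrium produces $t_i^*=t_j^*$, I would invoke Remark~\ref{rmk:simultaneous}: ties are nongeneric under continuous parameter distributions, and any that do occur are resolved by a fixed oversupply rule. That rule is deterministic given $(B^0,\Delta B)$, so the resulting allocation remains a single-valued function of $(B^0,\Delta B)$ and the reduction still holds.

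I would also note that the corollary claims only a reduction of the server's problem. It does not claim existence of a minimizer, so no continuity of $t_i^*$ in $(B^0,\Delta B)$ is needed; this matters because $t_i^*$ is integer-valued and may jump.
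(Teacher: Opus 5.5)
Your proposal is correct and follows the paper's own inline justification: Theorem~\ref{thm:spne} makes the equilibrium outcome single-valued in $(B^0,\Delta B)$, Theorem~\ref{thm:single_period} collapses the outlay to $\sum_i B^{t_i^*}\bar y_i^*$, and the linear schedule leaves only $(B^0,\Delta B)$ to choose; your tie-handling step just restates Remark~\ref{rmk:simultaneous}. Two minor precision points: the backward-induction chain of Sec.~\ref{subsec:backward_induction} takes the selling periods as given, so $t_i^*$ comes from Theorem~\ref{thm:spne} as a black box rather than from that chain; and the random-order oversupply rule is not deterministic, so your Step~3 should either restrict to the deterministic rules or take an expectation.
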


\subsection{Information-Free Implementation}\label{subsec:info_free}

A key practical advantage of the quotation mechanism is that it does not require the server to solve Problem \eqref{eq:server_opt} or know user parameters.

\begin{proposition}[Information-Free Implementation]\label{prop:info_free}
	The quotation-based price discovery mechanism (Algorithm~\ref{alg:quotation}) implements a data retention outcome without requiring the server to know any user's private parameters $(\lambda_i, k_i, \theta_i)$. At each round, the server only needs to:
	\begin{enumerate}
		\item broadcast the current uniform price $B^t$, and
		\item observe the aggregate supply $\sum_i y_i^t$.
	\end{enumerate}
	Each user independently determines their supply based on locally known parameters and the quoted price.
\end{proposition}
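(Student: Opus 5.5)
The plan is a constructive verification: walk through Algorithm~\ref{alg:quotation} line by line and list, for each step, which inputs the server uses and who supplies them. The claim then follows because no step reads any of $(\lambda_i,k_i,\theta_i)$. I will split the argument into three parts, matching the server's state variables, the users' decisions, and the oversupply and post-quotation branches.

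\textbf{Server-side quantities.} First I check that everything the server computes depends only on its own cost model and on observed aggregates.
\begin{itemize}
\item The price schedule \eqref{eq:price_schedule} is fixed in advance by the server's choice of $(B^0,\Delta B)$.
\item The target $y\subscript{max}$ in \eqref{eq:ymax} uses only $(a,A_1,A_2,\alpha,\beta,T_0,d)$.
\item The demand $\eta(y^t,B^t)$ in \eqref{eq:demand} uses only $C(\cdot)$, the current retained amount $y^t$, and $B^t$.
\end{itemize}
The retained amount updates as $y^{t+1}=y^t+\sum_i y_i^t$, so the server needs only the aggregate supply it observes in each round, plus the total endowment $d$. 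The endowment is known because the data already sit on the server. The termination tests also stay within these quantities: $\eta^t>0$ in the quotation phase, and $B^t(d-y^t)\leqslant C(y^t)-C(d)$ in the post-quotation phase.

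\textbf{User-side decisions.} Next I argue that each user's response is a function of locally known objects only.
\begin{itemize}
\item In the privacy-only model, the supply \eqref{equ:optimalselling} depends on $(B^t,\lambda_i,d_i,y_i^t)$.
\item In the accuracy-aware model, user $i$ maximizes \eqref{eq:accuracy_payoff} using its own $(\lambda_i,k_i,\theta_i,d_i)$ and the quoted price.
\end{itemize}
By Theorem~\ref{thm:single_period}, the timing component of this decision depends only on the publicly known ascending schedule. Any belief $\mu_i^t$ the user holds is formed from $B^t$ and the continuation signal alone, as in the Bayesian belief model of Sec.~\ref{subsec:bayesian_belief}. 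The server never needs these beliefs; it only receives the resulting quantities $y_i^t$.

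\textbf{Oversupply and post-quotation branches.} This is the step most likely to hide parameter dependence, because an optimal allocation would need the $\lambda_i$. I will check each rule of Sec.~\ref{subsec:oversupply}:
\begin{itemize}
\item major-sellers-first and minor-sellers-first sort by the reported quantities $q_i^t$;
\item proportional allocation scales by the $q_i^t$;
\item random order uses no user data at all.
\end{itemize}
So the individual reported supplies suffice, and these are observed quantities rather than private parameters.

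There is one subtlety. These rules use individual bids, whereas the proposition lists only the aggregate supply. I would state that individual quantities are needed only in the final round, for settlement, and are observed anyway when the server pays each seller. The post-quotation test $\sum_i q_i^t=d-y$ uses only the aggregate. Combining the three parts, the server's entire information set consists of its cost model, the schedule, and the observed supplies. This establishes that the mechanism is information-free.
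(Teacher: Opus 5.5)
Your server-side audit is the right argument and matches the paper. The paper gives no separate proof and rests this proposition on the construction of Algorithm~\ref{alg:quotation}: there, $y\subscript{max}$, $\eta(y^t,B^t)$, both termination tests and the four oversupply rules use only the server's cost model, the public schedule and observed quantities. Your remark that the oversupply rules need individual rather than aggregate supplies is a fair sharpening of the statement. In fact individual quantities are needed in every round, to pay each seller at $B^t$, not only in the last round.

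The step that does not hold is your appeal to Theorem~\ref{thm:single_period} on the user side.
\begin{itemize}
\item That theorem is a complete-information (CIQ) result, and it does not describe users in Algorithm~\ref{alg:quotation}. As Remark~\ref{rmk:comparison_privacy_only} states, IIQ users follow the greedy rule and sell incrementally across rounds.
\item In CIQ, the single selling period and amount are fixed in the SPNE by backward induction over the successors' responses. They therefore depend on other users' parameters, which is the opposite of the locality you want to establish.
\item Relatedly, \eqref{eq:accuracy_payoff} depends on $\mathbf{y}_{-i}$ through the accuracy term, and an IIQ user does not observe $\mathbf{y}_{-i}$. So ``user $i$ maximizes \eqref{eq:accuracy_payoff} using its own parameters'' is not well defined as written.
\end{itemize}

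The fix is to drop Theorem~\ref{thm:single_period} and invoke the behavioural assumption the paper adopts for IIQ (Sec.~\ref{subsec:price_of_ignorance}). In each round, user $i$ plays the myopic best response \eqref{equ:optimalselling}, with $P_i$ in place of the log utility. The accuracy externality is either ignored or replaced by a privately held belief. That response depends only on $B^t$, on $(\lambda_i,k_i,d_i)$ and on $i$'s own sales history, which is exactly what the proposition's last sentence requires.
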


In the practical quotation mechanism, users do \emph{not} observe other users' individual trades and cannot perform backward induction. The complete-information \ac{spne} of Theorem~\ref{thm:spne} therefore serves as a theoretical benchmark: it represents the best outcome achievable under the ascending quotation format if users had full strategic foresight. The actual outcome under incomplete information---where users follow myopic best-response strategies as in Sec.~\ref{sec:approach}---may differ, and we quantify this gap in Sec.~\ref{sec:efficiency}.


\section{Efficiency Analysis}\label{sec:efficiency}

We now formalize the \emph{Price of Ignorance}: the welfare cost of operating the quotation mechanism without private user information. We compare three regimes, ranging from the theoretical optimum to the practical mechanism:

\begin{enumerate}
	\item \textbf{\acf{opp}:} The server knows all user parameters and sets personalized prices $b_i$ for each user, as in \cite{CC2025price}. This is the social-welfare--maximizing benchmark under complete information.
	\item \textbf{\acf{ciq}:} The server runs the ascending quotation protocol, and users play the \ac{spne} of Theorem~\ref{thm:spne} with full strategic foresight. Prices are uniform but ascending.
	\item \textbf{\acf{iiq}:} The practical mechanism of Sec.~\ref{sec:approach}, extended to accuracy-aware users. Users follow myopic best-response strategies without observing others' trades.
\end{enumerate}

Let $\xi\subscript{OPP}$, $\xi\subscript{CIQ}$, and $\xi\subscript{IIQ}$ denote the social welfare achieved under each regime. The first ordering is immediate:
\begin{equation}\label{eq:welfare_ordering_opp_ciq}
	\xi\subscript{OPP} \geqslant \xi\subscript{CIQ},
\end{equation}
since personalized pricing can always replicate a uniform price (by setting $b_i = B$ for all $i$).

The comparison between \ac{ciq} and \ac{iiq}, however, is more nuanced and depends on the oversupply-handling strategy, as we analyze below.

\subsection{Price of Uniformity}

The gap $\xi\subscript{OPP} - \xi\subscript{CIQ}$ quantifies the cost of using uniform prices instead of personalized ones. This gap arises because the ascending quotation offers a single price at each round, whereas \ac{opp} can tailor prices to individual user characteristics.

\subsection{Price of Ignorance}\label{subsec:price_of_ignorance}

The welfare gap between \ac{ciq} and \ac{iiq} arises from differences in both the selling structure (single-period vs.\ incremental) and the strategic treatment of the accuracy externality. Myopic users in \ac{iiq} ignore the accuracy externality and may offer to sell more data than under the strategic play of \ac{ciq}. However, the server's demand $\eta$ is determined by its own cost function and is independent of users' strategies. The server purchases data only up to its demand, so the total amount of data actually retained---and hence the aggregate accuracy term $A(d - \bar{y})$---is the same across \ac{ciq} and \ac{iiq} for a given price path.

The welfare difference between \ac{ciq} and \ac{iiq} therefore reduces to a \emph{misallocation effect}: in \ac{ciq}, strategic users self-select efficiently into selling, whereas in \ac{iiq}, myopic oversupply in the final round is resolved by the oversupply-handling strategy (Sec.~\ref{subsec:oversupply}). Specifically:
\begin{itemize}
	\item The aggregate data retained by the server, $\bar{y}$, and the server's cost are identical.
	\item The accuracy disutility $\sum_i \theta_i A(d - \bar{y})$ is identical (it depends only on total $\bar{y}$).
	\item The difference lies in \emph{which users are partially cut} in the oversupply round, affecting: (a)~the distribution of compensation across users, and (b)~the distribution of privacy utilities.
\end{itemize}

Since transfers (compensation) cancel out in social welfare, the net welfare effect depends solely on how the oversupply-handling strategy allocates the cut among users, which determines the aggregate privacy utility. This yields the following result.

\begin{proposition}[Oversupply-Dependent Welfare Ordering]\label{prop:welfare_ordering}
	Under accuracy-aware users, the welfare comparison between \ac{ciq} and \ac{iiq} depends on the oversupply-handling strategy:
	\begin{equation}\label{eq:welfare_gap}
		\xi\subscript{CIQ} - \xi\subscript{IIQ} = \sum_{i \in \mathcal{I}} \left[P_i(d_i - \bar{y}_i\superscript{CIQ}) - P_i(d_i - \bar{y}_i\superscript{IIQ})\right],
	\end{equation}
	where $\bar{y}_i\superscript{CIQ}$ and $\bar{y}_i\superscript{IIQ}$ denote user $i$'s data sold under the respective regimes. The per-user allocations $\bar{y}_i\superscript{IIQ}$ are determined by the oversupply-handling strategy of Sec.~\ref{subsec:oversupply}, subject to $\sum_i \bar{y}_i\superscript{IIQ} = \sum_i \bar{y}_i\superscript{CIQ}$.
\end{proposition}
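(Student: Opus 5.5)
The plan is to write social welfare explicitly as the sum of all users' payoffs \eqref{eq:realized_payoff} minus the server's total expenditure and unlearning cost. Then each regime's welfare splits into three blocks: privacy utilities, transfers, and cost/accuracy terms. Concretely, for a regime $r\in\{\mathrm{CIQ},\mathrm{IIQ}\}$,
\begin{equation*}
\xi_r=\sum_{i\in\mathcal{I}}P_i(d_i-\bar{y}_i^{r})-\sum_{i\in\mathcal{I}}\theta_iA(d-\bar{y}^{r})-C(\bar{y}^{r})+\Big[\sum_{i}\sum_t y_i^{t,r}B^{t}-\sum_{i}\sum_t y_i^{t,r}B^{t}\Big].
\end{equation*}
The bracketed term vanishes identically, because every payment received by a user is exactly a payment made by the server. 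This holds regardless of timing, so the single-period structure of Theorem~\ref{thm:single_period} under CIQ and the incremental selling under IIQ both drop out at this stage.

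The second step is to establish $\bar{y}^{\mathrm{CIQ}}=\bar{y}^{\mathrm{IIQ}}$. I would argue this from the protocol of Algorithm~\ref{alg:quotation}. The server's demand $\eta(y,B)$ in \eqref{eq:demand} depends only on its own cost $C$, and the quotation terminates exactly when cumulative purchases reach $y\subscript{max}$, with oversupply truncated to demand in the last round (Sec.~\ref{subsec:oversupply}). Hence, whenever aggregate supply under both regimes eventually reaches demand along the common price path, total retention equals $y\subscript{max}$ in both. The post-quotation all-or-nothing step applies identically to both, or is excluded. Given this, the terms $C(\bar{y})$ and $\sum_i\theta_iA(d-\bar{y})$ coincide and cancel in the difference, and subtracting the two welfare expressions yields \eqref{eq:welfare_gap} directly. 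The constraint $\sum_i\bar{y}_i^{\mathrm{IIQ}}=\sum_i\bar{y}_i^{\mathrm{CIQ}}$ is just a restatement of this equality. The individual $\bar{y}_i^{\mathrm{IIQ}}$ are whatever the chosen oversupply rule assigns in the final round, added to the fully served earlier-round supplies.

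The main obstacle is the equal-total claim. Strategic CIQ users internalize the accuracy externality and might collectively under-supply relative to $y\subscript{max}$ at prices where myopic IIQ users already meet it, or the two regimes might terminate at different rounds. To handle this, I would first try to show that termination is governed by the server side alone. Both regimes face the same deterministic price schedule \eqref{eq:price_schedule}, and in both regimes supply is eventually nondecreasing in $B$ to full endowment, since $P_i'$ is bounded on $[0,d_i]$. So the server's demand is always eventually exhausted, and it is the binding quantity. If this cannot be shown in full generality, I would state the proposition for the case in which the server is demand-constrained in both regimes. This is exactly the case described in the preceding discussion, and it is the one the oversupply mechanism addresses.
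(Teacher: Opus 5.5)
Your cancellation step matches the paper's argument. Transfers cancel in welfare, and once $\bar y^{\mathrm{CIQ}}=\bar y^{\mathrm{IIQ}}$ the terms $C(\bar y)$ and $\sum_i\theta_iA(d-\bar y)$ cancel too, leaving \eqref{eq:welfare_gap}.

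The gap is in your second step, where you try to \emph{derive} the equal totals from the protocol. That derivation fails. For $B>0$ the maximiser in \eqref{eq:demand} is $\eta(y,B)=\max\{0,\,y^*(B)-y\}$, where $y^*(B)$ is defined by $-C'(y^*(B))=B$. Since $C'$ is increasing and vanishes at $y_{\max}$, we get $y^*(B)<y_{\max}$, and $y^*(B)$ falls as $B$ rises. So the quotation does not end by reaching $y_{\max}$. It ends when $\eta^t=0$, i.e.\ at the first round where cumulative purchases meet the falling curve $y^*(B^t)$.

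Where and when that happens depends on the supply path. That path differs between single-period SPNE selling (CIQ) and greedy incremental selling (IIQ), so termination is not governed by the server alone. Your remark that supply eventually reaches full endowment does not help, because demand vanishes first; the paper's own experiments report demand only 52\%--89\% filled at termination.

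Your fallback is also too weak. If each regime has a demand-binding last round, the totals are $y^*(B^{T_{\mathrm{CIQ}}})$ and $y^*(B^{T_{\mathrm{IIQ}}})$. These are equal only when both regimes terminate in the same round.

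The paper does not prove the equality either. Its phrase ``for a given price path'' silently fixes a common terminal round, and the statement builds the equality in as the side condition $\sum_i\bar y_i^{\mathrm{IIQ}}=\sum_i\bar y_i^{\mathrm{CIQ}}$. The repair is to treat that condition as a hypothesis, not as a consequence of Algorithm~\ref{alg:quotation}; it effectively encodes a common terminal round with binding demand. Under that hypothesis, your first and third steps are already the whole proof, and they coincide with the paper's argument.
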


\begin{remark}
	This result implies that the choice of oversupply-handling strategy has welfare consequences beyond the privacy-only model. We evaluate this numerically in Sec.~\ref{sec:evaluation}, comparing all four strategies under accuracy-aware users to identify which minimizes the \ac{ciq}--\ac{iiq} gap.
\end{remark}

\subsection{Robustness Advantage}\label{subsec:robustness}

\Ac{iiq} has a further advantage over \ac{opp}: \emph{robustness to information errors}. The \ac{opp} mechanism of \cite{CC2025price} requires the server to know each user's private parameters $(\lambda_i, k_i, \theta_i)$ to solve the bilevel optimization \cite[Problem~(11)]{CC2025price}. In practice, these parameters must be estimated, introducing errors.

We model the server's estimation error as follows. The server observes noisy estimates
\begin{equation}\label{eq:noisy_params}
	\hat{\lambda}_i = \lambda_i(1 + \varepsilon_i^{\lambda}), \qquad \hat{\theta}_i = \theta_i(1 + \varepsilon_i^{\theta}),
\end{equation}
where $\varepsilon_i^{\lambda}, \varepsilon_i^{\theta} \overset{\text{i.i.d.}}{\sim} \mathcal{N}(0, \sigma^2)$ are independent multiplicative noise terms, and $\sigma \geqslant 0$ controls the estimation quality. We use multiplicative noise because (i)~it preserves the non-negativity of $\lambda_i$ and $\theta_i$ (for small $\sigma$), and (ii)~it is scale-invariant, reflecting that estimation difficulty is proportional to parameter magnitude. The elasticity parameter $k_i \in [0,1]$ is assumed known, as it characterizes a structural property of the privacy function that can be estimated from aggregate behavior.

Given noisy estimates $(\hat{\lambda}_i, k_i, \hat{\theta}_i)$, the server solves \ac{opp}'s bilevel optimization \cite[Problem~(11)]{CC2025price} and obtains suboptimal personalized prices $\hat{\mathbf{b}}(\sigma)$. Users then respond to $\hat{\mathbf{b}}(\sigma)$ using their \emph{true} parameters, reaching a Nash equilibrium $\mathbf{x}^*(\hat{\mathbf{b}}(\sigma))$ that generally differs from the optimal Nash equilibrium under perfect information. The resulting social welfare is
\begin{equation}\label{eq:noisy_opp}
	\xi\subscript{OPP}(\sigma) = \mathbb{E}_{\boldsymbol\varepsilon}\!\left[\sum_{i \in \mathcal{I}} \left(P_i(x_i^*) - \theta_i A(\mathbf{x}^*)\right) - C(\mathbf{x}^*)\right],
\end{equation}
where the expectation is over the noise realization and $\mathbf{x}^* = \mathbf{x}^*(\hat{\mathbf{b}}(\sigma))$.

Since \ac{iiq} does not use any parameter estimates, its welfare $\xi\subscript{IIQ}$ is independent of $\sigma$. We therefore have
\begin{equation}\label{eq:robustness_crossover}
	\xi\subscript{OPP}(0) = \xi\subscript{OPP} \geqslant \xi\subscript{IIQ}, \qquad \lim_{\sigma \to \infty} \xi\subscript{OPP}(\sigma) \to \xi\subscript{random},
\end{equation}
where $\xi\subscript{random}$ denotes the welfare under essentially random pricing. Since $\xi\subscript{OPP}(\sigma)$ is continuous and decreasing in $\sigma$, there exists a crossover point $\sigma^*$ such that
\begin{equation}\label{eq:crossover}
	\xi\subscript{IIQ} > \xi\subscript{OPP}(\sigma) \quad \text{for all } \sigma > \sigma^*.
\end{equation}
That is, when the server's parameter estimation error exceeds $\sigma^*$, the information-free quotation mechanism \emph{outperforms} the nominally optimal personalized pricing. We evaluate $\sigma^*$ numerically in Sec.~\ref{sec:evaluation}.

\section{Numerical Evaluation}\label{sec:evaluation}
\subsection{Simulation Setup}\label{subsec:setup}
We evaluate the proposed mechanism through numerical simulations. The default parameter settings are shown in Table~\ref{tab:setup}, following the specifications of \cite{CC2024price}. For the accuracy-aware model, we additionally set $\theta_i \sim \mathcal{U}(0, 5)$ and $k_i = 1$ (logarithmic privacy) unless stated otherwise. Welfare comparison, robustness, and fairness experiments are averaged over $N = 5000$ Monte Carlo runs, and sensitivity over $N = 500$ per configuration point, all with independent user population draws. The random seed is fixed at $42$ for reproducibility.

\begin{table}[!htbp]
	\centering
	\caption{Simulation Setup}
	\label{tab:setup}
	\begin{tabular}{>{\cellcolor{white}}m{0.2cm} | m{1.8cm} m{2.7cm} m{2.4cm}}
		\toprule[2px]
		&\textbf{Parameter}&\textbf{Value}&\textbf{Remark}\\
		\midrule[1px]

		\rowcolor{gray!20}
		&	$I$	&	$10$	&	Number of users\\
		&	$d_i, \forall i\in\mathcal{I}$	&	6000 &	Data amount per user\\

		\rowcolor{gray!20}
		&	$\lambda_i, \forall i\in\mathcal{I}$	&	$\sim\mathcal{U}(0.5,30)$ &	Privacy valuation\\
		&	$\theta_i, \forall i\in\mathcal{I}$	&	$\sim\mathcal{U}(0,5)$ &	Accuracy sensitivity\\

		\rowcolor{gray!20}
		&	$k_i, \forall i\in\mathcal{I}$	&	$1.0$ &	Privacy elasticity\\
		&	$\Delta d$	&	$1$	 &	Unit data amount\\

		\rowcolor{gray!20}
		&	$B^0$	&	$0.001$	 &	Initial price\\
		&	$\Delta B$	&	$0.001$	 &	Price step\\

		\rowcolor{gray!20}
		\multirow{-9}{*}{\rotatebox{90}{\textbf{System}}}&	$N$	&	$500$--$5000$ & Monte Carlo runs\\
		\midrule[1px]
		&	$[a, A_1,A_2,A_3]$	&	$[e, 0.1,3.33\times 10^{-5},0]$	&	Accuracy parameters\\

		\rowcolor{gray!20}
		&	$T_0$	&	$2.85\times 10^{-4}$	&	Time factor\\
		\multirow{-3}{*}{\rotatebox{90}{\textbf{Cost}}} &	$[\alpha,\beta]$	&	[1500,1]	&	Weight factors\\

		\bottomrule[2px]
	\end{tabular}
\end{table}

\subsection{Oversupply Handling Strategies}
To compare the performance of the four oversupply-handling strategies proposed in Sec.~\ref{subsec:oversupply}, we carried out Monte Carlo tests. For each run, we measured the server's payoff:
\begin{equation}
	\xi\subscript{s}=C^T-C(0)-\sum\limits_{t=0}^T B^t,
\end{equation}
where $T$ is the index of round when the procedure terminates, the users' total payoff
\begin{equation}
	\xi\subscript{u}=\sum\limits_{i=1}^I \left[\sum\limits_{t=1}^T B^{t-1} (y_i^t-y_i^{t-1})+U_i\left(y_i^T\right)-U_i(0)\right],
\end{equation}
and the social welfare
\begin{equation}
	\xi=\xi\subscript{s}+\xi\subscript{u}.
\end{equation}
The simulation results are shown in Table~\ref{tab:oversupply}. As we have expected, the selection of oversupply-handling strategy does not affect the server's payoff at all (difference within $0.2\%$), while the \emph{minor-seller-first} strategy outperforms the other three regarding the users' payoff and social welfare by a very slight superiority (within $2\%$). This is not surprising as the oversupply handling is triggered only in the last quotation round, which takes only a small portion of the total quotation procedure.
\begin{table}[!htbp]
	\centering
	\caption{Comparing the oversupply-handling strategies}
	\begin{tabular}{l|c|>{\columncolor{gray!20}}c|c|c}
			\toprule[2px]
			&\textbf{Major-first}&\textbf{Minor-first}&\textbf{Prop.}&\textbf{Rand.}\\
			\midrule[1px]
			\textbf{Server's payoff} & 684.8 & 683.6 & 684.1 & 684.9 \\
			\textbf{Users' payoff} & 1133.5 & \textbf{1158.2} & 1150.2 & 1134.0\\
			\textbf{Social welfare} & 1818.4 & \textbf{1841.8} & 1834.3 & 1818.9 \\
			\bottomrule[2px]
		\end{tabular}
	\label{tab:oversupply}
\end{table}

\subsection{Baseline Methods}\label{subsec:baselines}
We compare our quotation mechanism against six baselines spanning three categories:
\begin{enumerate}
	\item \textbf{Centralized baselines:}
	\begin{itemize}
		\item \emph{\acf{opp}:} The server knows all user parameters $(\lambda_i, k_i, \theta_i)$ and solves the bilevel optimization of \cite{CC2025price} to set personalized prices $b_i$ for each user. This is the social-welfare-maximizing benchmark under complete information.
		\item \emph{\Ac{bsp}:} The solution of \cite{CC2024price}, where the server sets a single optimal uniform price.
	\end{itemize}
	\item \textbf{Decentralized quotation:}
	\begin{itemize}
		\item \emph{\acf{ciq}:} Users play the \ac{spne} of Theorem~\ref{thm:spne} under full strategic foresight.
		\item \emph{\acf{iiq}:} The practical mechanism of Sec.~\ref{sec:approach} where users follow myopic best-response.
	\end{itemize}
	\item \textbf{Boundary baselines:}
	\begin{itemize}
		\item \emph{\Ac{dnr}:} No redemption; the server keeps all data for free.
		\item \emph{\Ac{gdpr}:} All informed users redeem all data; the server bears full unlearning cost.
		\item \emph{FULL:} The server retains all data but compensates each user for the full privacy loss $P_i(d_i) - P_i(0)$.
	\end{itemize}
\end{enumerate}
Following \cite{CC2024price}, the \emph{informed ratio} $\rho \in [0,1]$ denotes the fraction of users aware of their redemption rights; uninformed users are treated under \ac{dnr} regardless of the mechanism used.

\subsection{Privacy-Only Welfare Comparison}\label{subsec:eval_privacy}
We first evaluate the privacy-only model ($\theta_i = 0, k_i = 1$), which recovers the setting of \cite{CC2024price}. Fig.~\ref{fig:welfare_privacy} shows the social welfare of all seven mechanisms as the informed ratio $\rho$ varies from $0\%$ to $100\%$.

\begin{figure}[!htbp]
	\centering
	\includegraphics[width=\linewidth]{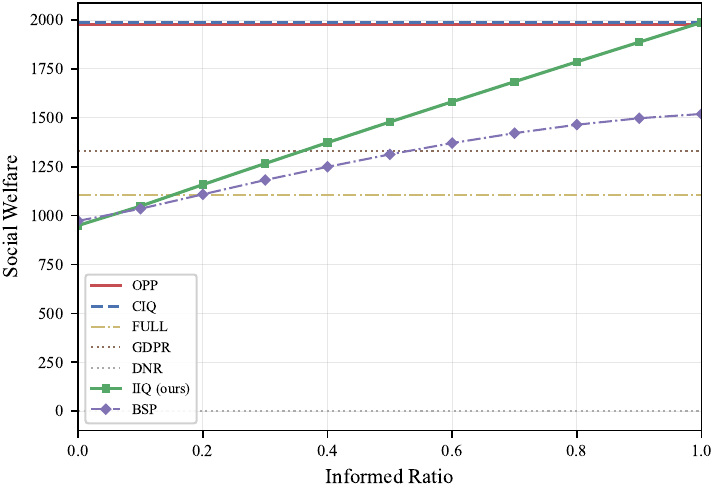}
	\caption{Privacy-only welfare comparison across mechanisms}
	\label{fig:welfare_privacy}
\end{figure}

\Ac{opp}, \ac{ciq}, and \ac{iiq} at full informed ratio ($\rho = 1$) achieve nearly identical welfare ($\approx 1976$, $\approx 1987$, and $\approx 1988$ respectively), confirming that in the privacy-only model, the Price of Ignorance is negligible---consistent with Sec.~\ref{subsec:price_of_ignorance}. \Ac{iiq} welfare increases monotonically with $\rho$, from $\approx 950$ at $\rho = 0$ to $\approx 1988$ at $\rho = 1$. \Ac{bsp} achieves $\approx 1520$ at full information, below \ac{iiq} but above \ac{gdpr}. The \ac{gdpr} baseline ($\approx 1329$) exceeds FULL ($\approx 1106$), since FULL must compensate all privacy loss.

\subsection{Accuracy-Aware Welfare Comparison}\label{subsec:eval_accuracy}
We now set $\theta_i \sim \mathcal{U}(0, 5)$ to activate the accuracy-aware model. Fig.~\ref{fig:welfare_accuracy} shows the welfare comparison.

\begin{figure}[!htbp]
	\centering
	\includegraphics[width=\linewidth]{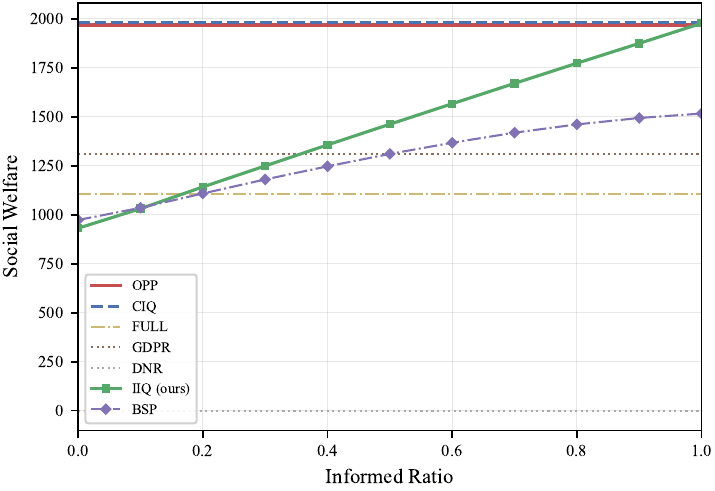}
	\caption{Accuracy-aware welfare comparison ($\theta_i \sim \mathcal{U}(0,5)$). The accuracy externality reduces welfare across all mechanisms.}
	\label{fig:welfare_accuracy}
\end{figure}

Under the accuracy-aware model, \ac{opp}, \ac{ciq}, and \ac{iiq} ($\rho = 1$) again achieve comparable welfare ($\approx 1968$, $\approx 1981$, and $\approx 1977$ respectively). The near-equivalence of the three regimes persists, consistent with the small Price of Ignorance predicted in Sec.~\ref{sec:efficiency}. Accuracy-aware users who are sensitive to model degradation retain more data, which reduces unlearning costs but also limits the scope for compensation-driven supply.

This near-equivalence does \emph{not} mean the three regimes are interchangeable. \Ac{opp} requires the server to know every user's private parameters $(\lambda_i, k_i, \theta_i)$, which is infeasible under data protection regulations and information asymmetry. \Ac{ciq} assumes users possess complete information about all others and play perfectly strategic equilibria---an equally unrealistic assumption. \Ac{iiq} is the \emph{only implementable} mechanism among the three: it requires no private parameter knowledge on either side and achieves $\geqslant 99\%$ of the theoretical optimum. The welfare gap in Figs.~\ref{fig:welfare_privacy}--\ref{fig:welfare_accuracy} is therefore not a shortcoming of \ac{iiq} but rather confirms that the information-free mechanism incurs virtually no efficiency loss relative to its information-intensive theoretical benchmarks.

\textbf{Under-supply vs.\ over-supply regimes: }
The results above reflect an \emph{under-supply} regime: the server's demand exceeds aggregate supply at the equilibrium price (demand fulfillment $\approx 66\%$). Proposition~\ref{prop:welfare_ordering} predicts that the welfare gap between \ac{opp} and \ac{iiq} widens in the \emph{over-supply} regime, where supply exceeds demand and the server must ration purchases. To verify this, we increase $\alpha = 10~000$ and $T_0 = 0.05$ so that the server's unlearning cost savings are large but time costs limit demand to $\approx 80\%$ of total data. Table~\ref{tab:supply_regimes} confirms the prediction: under over-supply, \ac{iiq} achieves only $61\%$ of \ac{opp} welfare because its uniform ascending price misallocates purchases across heterogeneous users, whereas \ac{opp}'s personalized pricing and \ac{ciq}'s endogenous ordering efficiently ration demand.

\begin{table}[!htbp]
\centering
\caption{Social welfare across supply regimes (accuracy-aware).}
\label{tab:supply_regimes}
\begin{tabular}{m{3.2cm} cccc}
	\toprule[2px]
	\textbf{Regime} & \textbf{OPP} & \textbf{CIQ} & \textbf{IIQ} & \textbf{IIQ/OPP} \\
	\midrule[1px]
	\rowcolor{gray!20}
	Under-supply (fill $\approx 66\%$) & 1968 & 1981 & 1977 & 1.00 \\
	Over-supply (fill $\approx 89\%$)  & 7581 & 7619 & 4604 & 0.61 \\
	\bottomrule[2px]
\end{tabular}
\end{table}

\Ac{iiq} is therefore most effective when the server's budget is the binding constraint (under-supply), which is the typical regime for machine unlearning where retraining costs are moderate and privacy valuations are high. In the over-supply regime, the server benefits from information about user heterogeneity to prioritize purchases.

\subsection{Robustness Under Parameter Estimation Noise}\label{subsec:eval_robustness}
To validate the robustness advantage predicted in Sec.~\ref{subsec:robustness}, we measure \ac{opp} welfare under multiplicative noise $\sigma \in \{0, 0.05, 0.1, 0.2, 0.5, 1.0, 2.0, 5.0\}$ on the server's estimates of $\lambda_i$ and $\theta_i$, as in Eq.~\eqref{eq:noisy_params}.

\begin{figure}[!htbp]
	\centering
	\includegraphics[width=\linewidth]{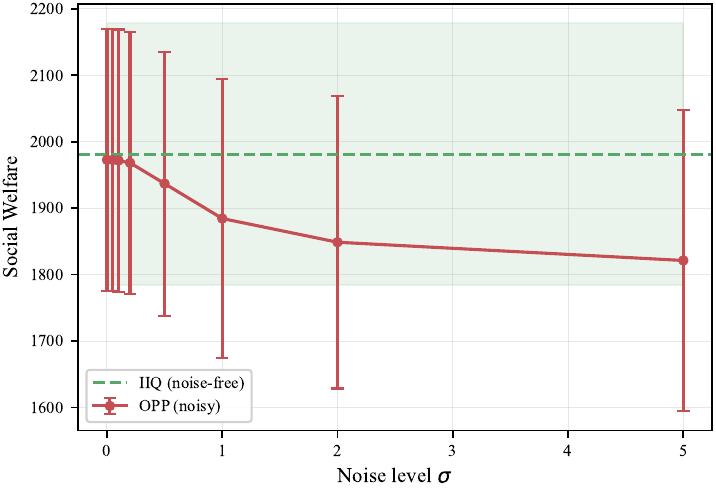}
	\caption{Robustness: \ac{opp} welfare degrades with estimation noise $\sigma$, while \ac{iiq} remains constant (green dashed line, with $\pm 1\sigma$ band).}
	\label{fig:robustness}
\end{figure}

Fig.~\ref{fig:robustness} shows the result. \Ac{opp} and \ac{iiq} achieve comparable welfare at $\sigma = 0$ ($\approx 1973$ and $\approx 1981$ respectively, within one standard error), but \ac{opp} welfare degrades monotonically to $\approx 1821$ at $\sigma = 5$ while \ac{iiq} remains constant. The welfare gap grows with noise: \ac{opp} loses $\approx 8\%$ of its welfare at $\sigma = 5$, whereas \ac{iiq} is entirely unaffected.

The crossover $\sigma^*$ predicted by Eq.~\eqref{eq:crossover} is not observed within the tested range because \ac{iiq} welfare is already competitive with \ac{opp} at $\sigma = 0$---a consequence of the near-zero Price of Ignorance. In effect, \ac{iiq} provides a guaranteed \emph{noise-independent} welfare level. In practice, two additional factors favor \ac{iiq}: (i)~the server avoids the operational cost of estimating user parameters, and (ii)~the information asymmetry is inherent---there is no reliable mechanism for the server to learn $(\lambda_i, \theta_i)$ without user cooperation. \Ac{iiq} circumvents this challenge entirely.

This strengthens the title's claim: since \emph{any} real deployment of \ac{opp} must operate at some $\sigma > 0$, the effective Price of Ignorance is even smaller than what Figs.~\ref{fig:welfare_privacy}--\ref{fig:welfare_accuracy} suggest (which compare against the idealized $\sigma = 0$). At moderate estimation noise ($\sigma \geqslant 0.5$), \ac{opp}'s realized welfare is already within \ac{iiq}'s confidence band, and at high noise ($\sigma \geqslant 2$), \ac{iiq} strictly dominates \ac{opp}---while requiring no estimation effort at all.

\subsection{Sensitivity Analysis}\label{subsec:eval_sensitivity}
We perform one-at-a-time sweeps of six key parameters to assess the robustness of our findings. For each sweep, we fix all other parameters at their default values and vary the target parameter across a range.

\begin{figure*}[!htbp]
	\centering
	\includegraphics[width=\linewidth]{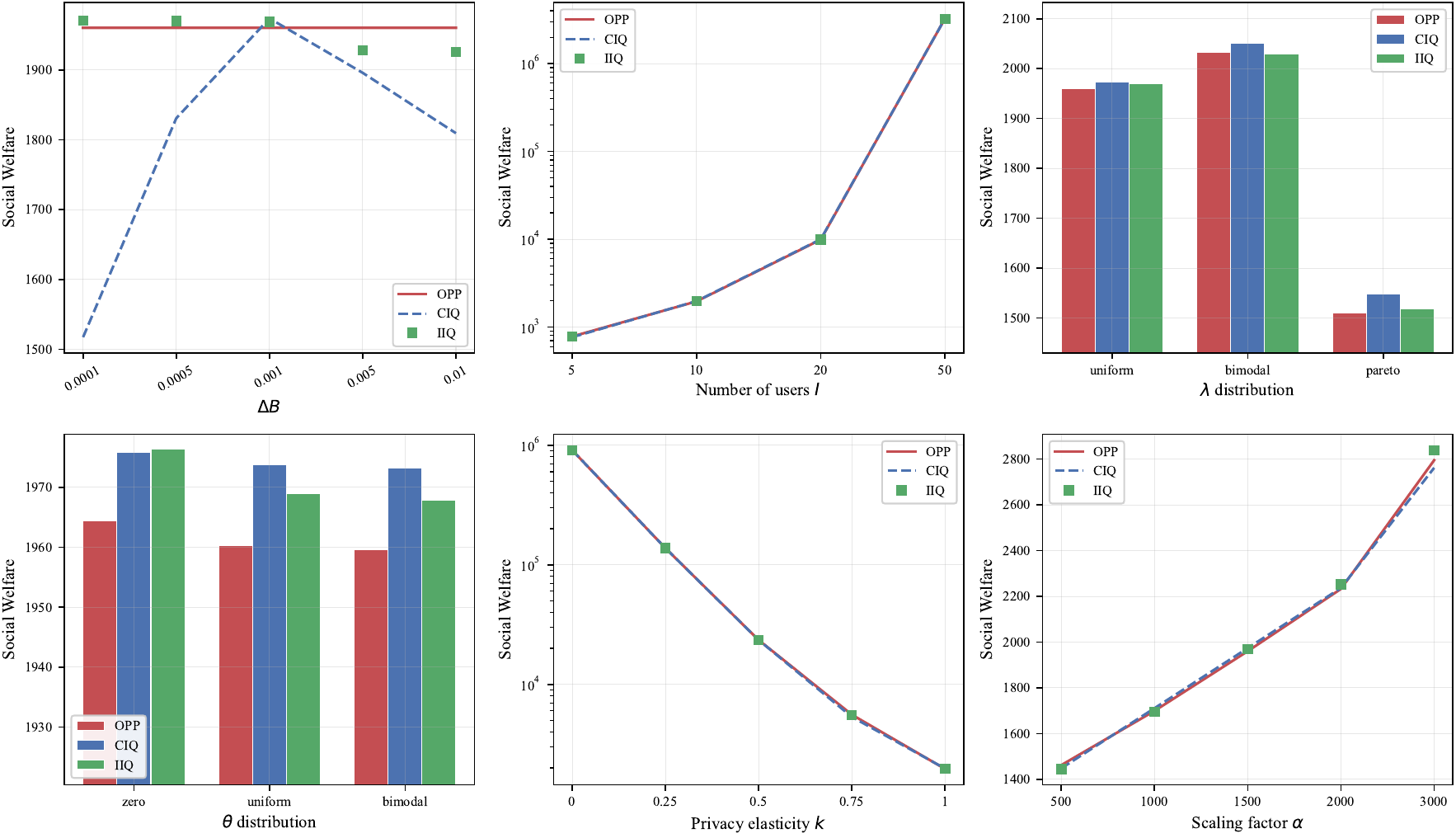}
	\caption{Sensitivity of social welfare to key parameters. Each panel sweeps one parameter while holding others at defaults.}
	\label{fig:sensitivity}
\end{figure*}

Fig.~\ref{fig:sensitivity} presents a six-panel summary:
\begin{itemize}
	\item \textbf{Price step $\Delta B$}: Smaller $\Delta B$ improves all mechanisms by enabling finer price discovery. The near-equivalence of \ac{opp}, \ac{ciq}, and \ac{iiq} is preserved across all values.
	\item \textbf{Number of users $I$}: Welfare scales with $I$ for all mechanisms. All three mechanisms remain tightly clustered even at $I = 50$.
	\item \textbf{Privacy valuation distribution}: Bimodal and Pareto distributions shift welfare levels but preserve the near-equivalence across mechanisms.
	\item \textbf{Accuracy sensitivity distribution}: Introducing non-zero $\theta_i$ has negligible impact on the gap between mechanisms, consistent with the dominance of the privacy term $P_i(x_i)$ over the accuracy externality $\theta_i A(x)$ at our parameter scale.
	\item \textbf{Privacy elasticity $k$}: Lower $k$ (more concave privacy function) dramatically increases welfare by reducing the marginal cost of selling. All three mechanisms respond similarly, preserving near-equivalence.
	\item \textbf{Scaling factor $\alpha$}: Higher $\alpha$ increases the server's cost savings from data retention, driving up welfare for all mechanisms proportionally.
\end{itemize}

\subsection{Fairness Analysis}\label{subsec:eval_fairness}
Beyond efficiency, we evaluate the fairness of payoff distribution across users using Jain's fairness index $\mathcal{J} = (\sum w_i)^2 / (I \sum w_i^2)$, the coefficient of variation, and the min-max payoff ratio.

\begin{figure*}[!htbp]
	\centering
	\includegraphics[width=\linewidth]{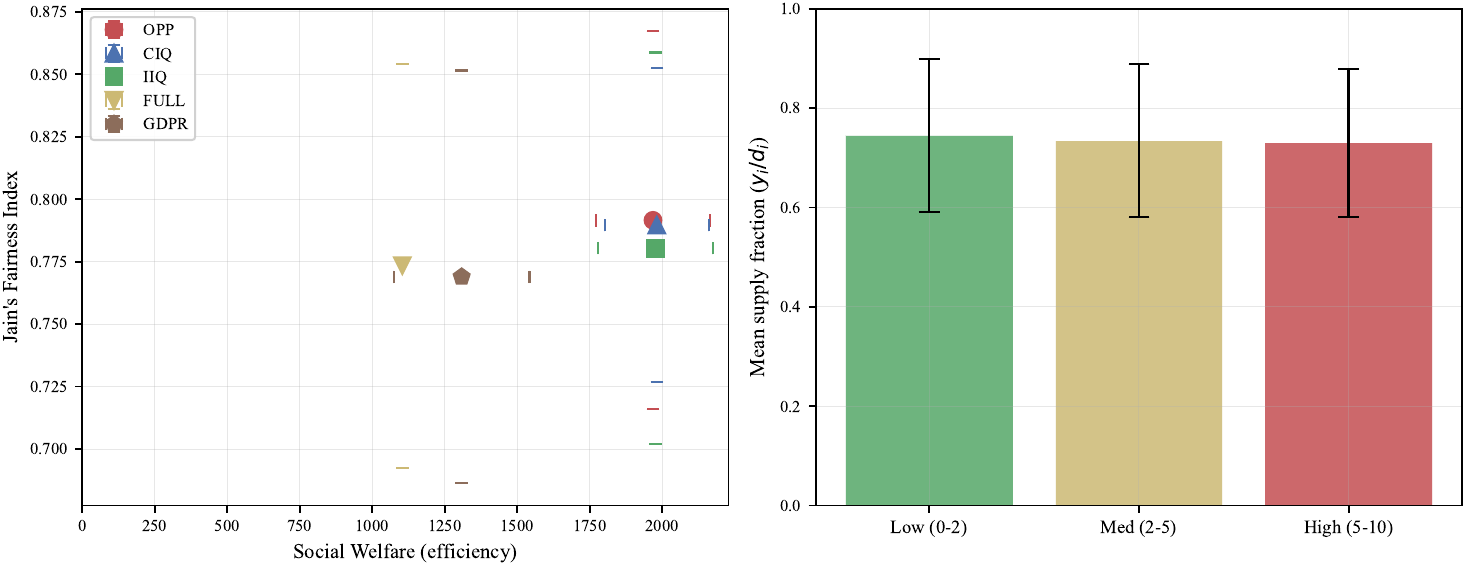}
	\caption{Left: Fairness-efficiency tradeoff across mechanisms. Right: Free-rider characterization---mean supply fraction by accuracy sensitivity level.}
	\label{fig:fairness}
\end{figure*}

Fig.~\ref{fig:fairness} (left) plots each mechanism in the social welfare vs.\ Jain's index space. \Ac{opp}, \ac{ciq}, and \ac{iiq} cluster together at high welfare ($\approx 1968$--$1981$) with comparable Jain's indices ($\mathcal{J} \approx 0.78$--$0.79$). \Ac{iiq} exhibits a slightly lower $\mathcal{J}$ than \ac{opp} and \ac{ciq} because its uniform ascending price does not adapt to individual heterogeneity: users with low $\lambda_i$ sell early at low prices and receive smaller payoffs, while high-$\lambda_i$ users retain more data and accumulate larger privacy utility. \Ac{opp}'s personalized prices and \ac{ciq}'s endogenous ordering partially compensate for this heterogeneity, yielding marginally tighter payoff distributions. \Ac{gdpr} and FULL cluster at lower welfare with lower fairness due to the asymmetric impact of accuracy loss across users with heterogeneous $\theta_i$.

Fig.~\ref{fig:fairness} (right) characterizes the free-rider effect. Under our default parameters, the mean supply fraction varies only slightly across $\theta_i$ bins (low: $0.75$, medium: $0.74$, high: $0.73$), suggesting that the accuracy externality has a limited impact on individual supply decisions when $\theta_i$ is moderate relative to $\lambda_i$. This is because the accuracy term $\theta_i A(x)$ is small compared to the privacy term $P_i(x_i)$ at our parameter scale. The free-rider effect becomes more pronounced when $\theta_i$ is large relative to $\lambda_i$.

\subsection{Convergence and Scalability}\label{subsec:eval_convergence}
Finally, we verify the convergence properties of the quotation mechanism.

\begin{figure}[!htbp]
	\centering
	\includegraphics[width=.85\linewidth]{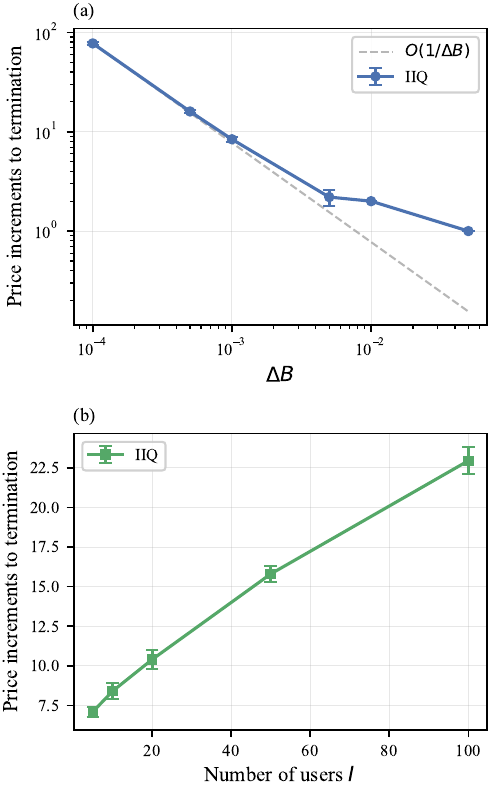}
	\caption{Convergence analysis. (a)~Price increments to termination vs.\ $\Delta B$ (log-log; dashed: $O(1/\Delta B)$ reference). (b)~Price increments vs.\ number of users $I$.}
	\label{fig:convergence}
\end{figure}

Fig.~\ref{fig:convergence} (left) confirms the $O(1/\Delta B)$ convergence rate: reducing $\Delta B$ from $0.05$ to $0.0001$ increases the number of price increments from $1$ to $\approx 78$. Each increment corresponds to one broadcast--collect cycle in Algorithm~\ref{alg:quotation}, where the server raises the price by $\Delta B$, users independently respond with supply decisions, and the server purchases. The log-log plot closely tracks the $O(1/\Delta B)$ reference line, validating the theoretical bound.

The right panel shows that the number of price increments grows with $I$ (from $7$ at $I = 5$ to $23$ at $I = 100$). Termination occurs when the server's marginal cost of raising the price exceeds its marginal cost savings---i.e., the server's surplus is exhausted. In all cases, demand is only partially filled at termination ($52\%$ at $I = 5$ to $89\%$ at $I = 100$), confirming that the binding constraint is the price cap, not demand saturation. The growth in rounds with $I$ arises because the total data pool $D = \sum d_i$ scales with $I$, which exponentially increases the unlearning cost $C(0) = \alpha A_1 e^{A_2 D}$ and hence the server's willingness to pay---the equilibrium termination price is higher, requiring more increments to reach from $B_0 = 0$. This suggests that an adaptive starting price $B_0(I)$ or geometric price increments could reduce the $I$-dependence (see Section~\ref{sec:discussion}).

\subsection{Summary: The Price of Ignorance Is Near Zero}\label{subsec:eval_summary}
\Ac{iiq} is the \emph{only} regime among \ac{opp}, \ac{ciq}, and \ac{iiq} that is deployable without private user information, yet it achieves welfare within $1\%$ of both information-intensive benchmarks. The Price of Ignorance---the welfare sacrificed by knowing nothing about users---is negligible. Table~\ref{tab:iiq_summary} consolidates the comparison.

\begin{table}[!htbp]
\centering
\caption{Practical comparison of the three regimes.}
\label{tab:iiq_summary}
\begin{tabular}{m{3.8cm} ccc}
	\toprule[2px]
	\textbf{Property} & \textbf{OPP} & \textbf{CIQ} & \textbf{IIQ} \\
	\midrule[1px]
	\rowcolor{gray!20}
	Private params.\ required by server & Yes & No & No \\
	Strategic reasoning required by users & No & Yes & No \\
	\rowcolor{gray!20}
	Implementable in practice & No & No & \textbf{Yes} \\
	Welfare at $\rho = 1$ (acc.-aware) & 1968 & 1981 & 1977 \\
	\rowcolor{gray!20}
	Robust to estimation noise & No & N/A & \textbf{Yes} \\
	Ex post individually rational & Yes & Yes & \textbf{Yes} \\
	\bottomrule[2px]
\end{tabular}
\end{table}

\Ac{opp} and \ac{ciq} serve as \emph{theoretical upper bounds} that quantify the Price of Ignorance. The answer---less than $1\%$ in the under-supply regime that characterizes typical machine unlearning deployments---validates \ac{iiq} as the mechanism of choice. The only scenario where \ac{iiq} underperforms substantially is the over-supply regime (Table~\ref{tab:supply_regimes}), where the server benefits from knowing user heterogeneity to prioritize purchases; this regime can be mitigated by operational design choices such as adaptive price scheduling (Sec.~\ref{sec:discussion}).

\section{Discussion}\label{sec:discussion}

\subsection{Voluntary Participation and Strategic Properties}

The mechanism satisfies \emph{ex post individual rationality}: at every round and for every realization of other users' actions, each user can guarantee non-negative payoff from that round by simply choosing not to sell. No user is ever forced to participate or penalized for abstaining.

Regarding supply manipulation, the price path $B^t = B^0 + t \cdot \Delta B$ is exogenous and independent of individual supply decisions. A price-taking user (whose individual supply does not influence aggregate supply significantly) therefore cannot affect the price trajectory or the termination time, making the myopic best-response strategy individually optimal. Under complete information with strategic users, Theorem~\ref{thm:spne} characterizes the unique \ac{spne}, from which no unilateral deviation is profitable by construction.


\subsection{Per-User Regret}
The cost of incomplete information can also be measured at the individual level through the \emph{ex post regret} of each user. Given the realized outcome under \ac{iiq}, define user $i$'s regret as
\begin{equation}\label{eq:regret}
	R_i \triangleq W_i^*(\mathbf{y}_{-i}^{\text{IIQ}}) - W_i^{\text{IIQ}},
\end{equation}
where $W_i^*(\mathbf{y}_{-i}^{\text{IIQ}})$ is the payoff user~$i$ would have achieved by playing the ex post best response, given full knowledge of others' realized actions and the termination price $B^T$. The aggregate regret $\sum_i R_i$ provides a per-user decomposition of the Price of Ignorance, identifying which users are most disadvantaged by myopic play. Users who begin selling early at low prices when they could have waited (or vice versa) exhibit higher regret. This metric is readily computable from our simulation outputs and offers a finer-grained view of the information gap than the aggregate welfare comparison alone.

\subsection{Public Good Structure and Free-Riding}
Data retention in our model has the structure of a \emph{voluntary contribution game}~\cite{BBV1986public}: model accuracy $1 - A(\sum_j x_j)$ is a public good that benefits all users, and each user's decision to retain data (rather than redeem it) constitutes a contribution to this public good. The free-riding incentive~\cite{Andreoni1988freeride} is to redeem one's own data---gaining privacy utility---while hoping that other users retain enough data to maintain model accuracy.
In this framing, high-$\theta_i$ users are natural \emph{contributors}: they value accuracy more, so the marginal benefit of others' retention is higher, and they are more willing to retain data themselves. Low-$\theta_i$ users are potential \emph{free-riders}: they are less affected by accuracy degradation and thus more inclined to redeem. Our fairness analysis confirms this pattern: the supply fraction varies only slightly across $\theta_i$ bins ($\approx$0.73--0.75), suggesting that the compensation mechanism $B$ acts as a Pigouvian subsidy that partially internalizes the accuracy externality and mitigates free-riding.

\subsection{Adaptive Price Scheduling}
Our convergence analysis (Section~\ref{subsec:eval_convergence}) reveals that the number of price increments grows with $I$ because the equilibrium termination price increases with total data volume. Since termination is driven by exhaustion of the server's surplus rather than demand fulfillment, the early low-price rounds where no user would sell are wasted. Two design improvements can reduce this $I$-dependence without sacrificing the information-free property: (i)~an \emph{adaptive starting price} $B_0 = f(D_\text{total})$ that skips the vacuous low-price region, and (ii)~\emph{geometric price increments} $B_{t+1} = (1 + r) B_t$ that yield $O(\log(B^*/B_0))$ rounds independent of the absolute price scale. Both modifications preserve the ascending-price structure and hence the incentive properties of Theorems~\ref{thm:single_period}--\ref{thm:spne}; the optimal choice of $B_0$ and $r$ as functions of publicly observable system parameters (e.g., total data volume $D$, number of users $I$) is an interesting direction for future work.

\subsection{Deployment Considerations}
For \acp{mno}, \ac{iiq} requires no user profiling or parameter estimation---the server need only broadcast a price and collect aggregate supply responses, simplifying compliance with data protection regulations that restrict processing of user behavioral data. The convergence rate scales as $O(1/\Delta B)$, giving operators a direct latency--precision tradeoff: a larger price step terminates faster but yields coarser welfare approximation. The mechanism also exhibits a natural resistance to collusion that stems from the public good structure of model accuracy. In a standard market, colluders benefit by jointly suppressing supply to raise prices. Here, colluding to withhold data retention would degrade model accuracy for \emph{all} colluders, since each accuracy-aware user ($\theta_i > 0$) benefits from others retaining data. This misalignment of collusive incentives---each user wants \emph{others} to contribute more, not less---makes sustained collusion self-defeating.

\subsection{Limitations}
Our model assumes that users are myopic under incomplete information and do not learn others' strategies across rounds. In practice, sophisticated users might infer aggregate supply from observable outcomes (e.g., model performance changes) and adjust behavior accordingly. We also model privacy utility as time-invariant, whereas privacy valuations may evolve as users become more aware of data risks. The accuracy loss function $A(\cdot)$ is treated as known to all parties; in practice, the server's accuracy model may itself be uncertain. Extending the framework to address these dynamic and informational aspects is left for future work.

\section{Conclusions}\label{sec:conclusion}
The Price of Ignorance in data retention pricing is near zero. An information-free ascending quotation---where the server knows nothing about users' privacy preferences or accuracy sensitivities---achieves $\geqslant 99\%$ of the welfare of personalized pricing that requires full parameter knowledge, while providing noise-robust guarantees and comparable fairness. The server sacrifices virtually no efficiency by not profiling its users.

To reach this conclusion, we extended the data redemption model with generalized privacy functions (elasticity $k_i$), accuracy-aware payoffs (sensitivity $\theta_i$), and heterogeneous data endowments. Under complete information, we proved that the ascending quotation admits a unique \ac{spne} characterized by single-period selling (Theorem~\ref{thm:single_period}) and backward induction (Theorem~\ref{thm:spne}). Our three-regime efficiency ordering---\ac{opp}~$\geqslant$~\ac{ciq}~$\geqslant$~\ac{iiq}---and its dependence on the oversupply-handling strategy (Proposition~\ref{prop:welfare_ordering}) formalizes exactly where and why ignorance costs welfare, and numerical evaluation across seven mechanisms and six parameter dimensions confirms that this cost is negligible in the under-supply regime typical of machine unlearning.

Future work will explore: \begin{enumerate*}[label=(\arabic*)]
	\item tighter analytical bounds on the Price of Ignorance ($\xi\subscript{CIQ} - \xi\subscript{IIQ}$) and per-user regret analysis;
	\item Bayesian belief models where users update their accuracy estimates over successive quotation rounds;
	\item extending the model to dynamic data accumulation where new data arrives between unlearning events; and
	\item strategic information sharing among users.
\end{enumerate*}


\bibliographystyle{IEEEtran}
\bibliography{references}

\end{document}

%% file: glossary.tex
\makeglossaries
\newacronym{ai}{AI}{artificial intelligence}
\newacronym{bsp}{BSP}{bi-stage pricing}
\newacronym{ccpa}{CCPA}{\emph{California Consumer Privacy Act}}
\newacronym{ciq}{CIQ}{Complete-Information Quotation}
\newacronym{dhr}{DHR}{decreasing hazard rate}
\newacronym{dnr}{DNR}{do-not-redeem}
\newacronym{foc}{FOC}{first-order condition}
\newacronym{gdpr}{GDPR}{\emph{General Data Protection Regulation}}
\newacronym{genai}{GenAI}{generative artificial intelligence}
\newacronym{iiq}{IIQ}{Incomplete-Information Quotation}
\newacronym{llm}{LLM}{large language model}
\newacronym{ml}{ML}{machine learning}
\newacronym{mno}{MNO}{mobile network operator}
\newacronym{opp}{OPP}{Optimal Personalized Pricing}
\newacronym{pipl}{PIPL}{\emph{Personal Information Protection Law}}
\newacronym{qos}{QoS}{quality of service}
\newacronym{spne}{SPNE}{subgame-perfect Nash equilibrium}